\newtheorem{thm}{Theorem}[section]
\newtheorem{prop}[thm]{Proposition}
\newtheorem{lemma}[thm]{Lemma}
\newtheorem{cor}[thm]{Corollary}
\theoremstyle{remark}
\theoremstyle{remark}
\newtheorem{remark}[thm]{Remark}
\theoremstyle{definition}
\newtheorem{defn}[thm]{Definition}
\theoremstyle{remark}
\newtheorem{ex}[thm]{Example}
\newcommand{\R}{{\mathbb{R}}}
\begin{document}

\title[Apparent horizons with product of spheres topology]{Existence 
of outermost
apparent horizons with product of spheres topology}

\author{Fernando Schwartz}
\address{Mathematics Department, Duke  University}
\email{fernando@math.duke.edu}

\maketitle

\begin{abstract} 

In this paper we find new examples of Riemannian
manifolds with nonspherical apparent horizon, in dimensions 
four and above.
More precisely, for any $n,m\ge 1$, we construct 
asymptotically flat, scalar flat Riemannian manifolds 
containing smooth outermost minimal hypersurfaces
with topology $S^n\times S^{m+1}$.   In the context 
 of general relativity these hypersurfaces correspond to 
 outermost apparent horizons of black holes.
 \end{abstract}

\section{Introduction and main result}

A well known result by Meeks, Simon and Yau \cite{meekssimonyau82}
states that the complement of the region enclosed by the 
outermost minimal surface of a
Riemannian 3-manifold (that is asymptotically flat --AF--
with nonnegative scalar curvature) is diffeomorphic to $\R^3$
minus a finite number of  balls.  This is,  outermost apparent horizons (in three dimensions)
have spherical topology.  Galloway and Schoen's theorem
 \cite{gallowayschoen05}, \cite{galloway06}, which 
 can be thought as a generalization of the above result, 
gives that apparent horizons of $n$-dimensional 
AF manifolds with nonnegative scalar curvature ($n\ge 4$)  
are of positive Yamabe type, i.e. 
admit a metric of positive scalar curvature.  

By shrinking the $S^{m+1}$  factor, it is easily seen that the  product
of spheres $S^n\times S^{m+1}$ has positive Yamabe invariant
for all $n,m\ge 1$.  A natural question 
that arises in view of the above discussion is to determine
whether there exist examples of apparent horizons with
topology $S^n\times S^{m+1}$.  In this paper we address
this question. Our main
result is the following.

 \begin{thm}\label{main} 
For any $n,m\ge1$ there exists an
asymptotically flat, scalar flat $(n+m+2)$-dimensional
 Riemannian manifold $(M,g)$
with outermost apparent horizon which is an outermost smooth 
minimal hypersurface with topology $S^n\times S^{m+1}$.
\end{thm}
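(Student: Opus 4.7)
The plan is to construct $(M,g)$ by a conformal deformation of flat Euclidean space, in the spirit of the isotropic form of Riemannian Schwarzschild. Write $N=n+m+2$ and $\R^N=\R^{n+1}\oplus\R^{m+1}$, and embed $L\cong S^n$ as the unit sphere of the $\R^{n+1}$ factor. The normal bundle of $L$ in $\R^N$ is trivial of rank $m+2$ (normal in $\R^{n+1}$ is the trivial radial line bundle, augmented by the trivial $\R^{m+1}$), so the tube boundary $\partial T_\delta(L)$ is diffeomorphic to $S^n\times S^{m+1}$; this is the target topology for the horizon. Take the harmonic conformal factor
\[
u_\epsilon(x)=1+\epsilon\int_L |x-p|^{-(N-2)}\,d\ell(p),
\]
which is positive and harmonic on $\R^N\setminus L$, invariant under $SO(n+1)\times SO(m+1)$, and satisfies $u_\epsilon-1=O(|x|^{-(N-2)})$ at infinity. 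Hence $g_\epsilon=u_\epsilon^{4/(N-2)}g_{\mathrm{flat}}$ is scalar flat on $\R^N\setminus L$ and asymptotically flat with ADM mass proportional to $\epsilon$.

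By symmetry, $u_\epsilon$ descends to a function on the quarter-plane $Q=\{(\rho,s):\rho,s\geq 0\}$, where $\rho$ and $s$ are the radial coordinates in the two $\R$-summands; the singular locus $L$ collapses to the single boundary point $p_*=(1,0)\in\partial Q$. An invariant closed hypersurface in $\R^N$ corresponds to a curve $\gamma\subset Q$, and its $g_\epsilon$-area equals the weighted length
\[
\mathrm{Area}_{g_\epsilon}(\gamma)=c_{n,m}\int_\gamma u_\epsilon^{\,2(N-1)/(N-2)}\,\rho^n s^m\, d\ell_0,
\]
so minimal hypersurfaces correspond to weighted geodesics in $Q$. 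The topological key is that a curve $\gamma$ whose interior lies in the open quadrant and whose two endpoints lie on the axis $\{s=0\}$ on opposite sides of $p_*$ lifts to a hypersurface diffeomorphic to $S^n\times S^{m+1}$: the $S^m$-fiber from $SO(m+1)$ collapses at each axis endpoint, turning $S^m\times[0,1]$ into the suspension $S^{m+1}$. I would produce such a curve $\gamma_\epsilon$ by minimizing weighted length among curves of this topological type, for $\epsilon$ chosen so that the singularity of $u_\epsilon$ at $p_*$ is strong enough to force a genuine minimizer. Reflecting the lifted exterior across the resulting horizon, using the $\mathbb{Z}_2$ symmetry to give smoothness, then yields the required complete, asymptotically flat $(M,g)$.

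The main obstacle is to verify the topology and the outermost property of the minimizer. A priori the infimum of weighted length could be attained by a closed loop in the open quadrant (yielding the wrong topology $S^1\times S^n\times S^m$), by a curve touching the axis at only a single point, or not be attained at all. Ruling these out requires a careful local analysis of the weighted-geodesic equation near both $p_*$ and the axis $\{s=0\}$, exploiting the precise order of the singularity of $u_\epsilon$ at $p_*$ and the vanishing of the weight $\rho^n s^m$ along the axis. Outermostness demands in addition that no other minimal hypersurface, invariant or not, lie outside $\gamma_\epsilon$; I would establish this by foliating the exterior region by hypersurfaces of positive mean curvature (perhaps using dilations together with the monotonicity of $u_\epsilon$ in the radial direction) and applying the maximum principle. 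These two analytic points — the existence of a minimizer with endpoints on the correct axis, and the foliation of the exterior — constitute the heart of the proof.
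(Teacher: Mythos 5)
Your construction is essentially the paper's: conformally blow up the flat metric on $\R^{n+m+2}\setminus S^n$ with $U=1+\epsilon\int_{S^n}|p-q|^{-(n+m)}$, use $SO(n+1)\times SO(m+1)$-symmetry to reduce to a weighted-geodesic problem for a curve $\gamma$ in the quarter-plane, and observe that a curve with both endpoints on the $x$-axis on opposite sides of the trace $(1,0)$ of $S^n$ lifts to a hypersurface of topology $S^n\times S^{m+1}$. That much is correct and matches the paper. Two points are off, one minor and one substantive.

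First, the minor one: the final ``reflect across the horizon by a $\mathbb{Z}_2$ symmetry'' step is both unnecessary and wrong. The manifold $(\R^{n+m+2}\setminus S^n, U^{4/(n+m)}\delta)$ is already complete, scalar flat, and asymptotically flat (the missing $S^n$ becomes a second complete end since $U\to\infty$ there), so no doubling is required; and doubling across a hypersurface that is merely minimal, not totally geodesic, does not produce a smooth metric. There is no obvious $\mathbb{Z}_2$ isometry fixing the horizon here.

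Second, and this is the real gap: you list possible failure modes for the minimizer (closed loop in the interior, curve touching the axis once, nonexistence) but do not isolate the actual danger, which is that the horizon curve $\gamma^*$ passes through the \emph{origin} $(0,0)$, where both spherical orbits degenerate simultaneously. In that case $\Sigma(\gamma^*)$ is the $SO(n+1)\times SO(m+1)$-invariant minimal cone; since $n+m+2$ can be $\ge 8$, a minimizing current is allowed codimension-$7$ singularities, and symmetry forces any such singular set to be exactly the conical point at the origin, so this possibility cannot be dismissed by general regularity theory. Ruling it out is where all the work lies, and ``careful local analysis near $p_*$ and the axis'' is not a plan. What the paper does is a two-step maximum-principle-plus-phase-plane argument: (i) a gradient estimate on $U$ shows any minimal curve stays in the strip $\{y\le c'\epsilon^{1/(n+m)}\}$, so $\gamma^*$ hugs the $x$-axis; (ii) a change of variables $t=\log x,\ W=y/x,\ Z=y'$ turns the graphical minimal equation into a nearly autonomous first-order system, and one exhibits a trapping region $\Omega$ in the $(W,Z)$-plane (the ``conical attractor'') such that any solution starting within distance $\delta=\epsilon^{1/(n+m+1)}$ of the origin on either axis has its phase enter $\Omega$ and stay there, which forces the curve to be a complete, hence unbounded, graph. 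Since the horizon is compact, $\gamma^*$ cannot come near the origin, so both endpoints lie on the $x$-axis away from $(0,0)$, giving smoothness and the topology $S^n\times S^{m+1}$. Your proposal does not contain, or anticipate, this mechanism. (On outermostness, your concern is moot: with Wald's definition the apparent horizon is outermost by construction, and the positive-mean-curvature foliations of the two ends are used to show the horizon \emph{exists} and is confined, exactly as you suggest.)
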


Our result also has some relevance in high dimensional relativity,
since it gives suitable initial data (for the Cauchy problem)
with nonspherical apparent horizon that evolves into a black hole.  
  This behavior  is particularly interesting in high dimensions due to 
  string theory (see\cite{peet99}), and also because
there are the only {\it two} known examples 
of nonspherical black holes, both specific to 5 spacetime 
 dimensions.  These are  Emparan and Reall's ``rotating black ring" \cite{emparanreall02}, which is a 5-dimensional spacetime with 
horizon topology  $S^1\times S^2$, and the ``black saturn"
of  \cite{elvangfigueras}, \cite{elvangemparanfigueras}, which
is a black ring rotating around a spherical black hole, and was
obtained by a method that also produces black rings. 
(In \cite{emparan07} there are {\it approximate} spacetime solutions
with black ring topology $S^1\times S^n, n\ge 2$.)\\

 {\it Overview of the proof.}  We construct $(M,g)$ by conformally 
 blowing up an $n$-sphere inside  $\R^{n+m+2}$. We show that this
 manifold has an outermost apparent horizon because we can find barriers.  
 Using
symmetry and geometric measure theory, we 
prove that the horizon is a smooth minimal hypersurface that
(possibly) has a conical singularity at the origin.
By the maximum principle we show that the horizon has two
options. (1) It is smooth everywhere and with topology 
$S^n\times S^{m+1}$, or (2) It is close to having (or has) a 
conical singularity.  Finally, using an ODE analysis 
on the minimal surface equation we prove that a conical singularity 
is an attractor that causes nearby solutions to be 
unbounded. This rules out option (2) since
the horizon is compact.\\

This paper is organized as follows.
In \S \ref{sec1} we introduce notation and give a general argument
for the existence of apparent horizons when barriers are present. In
\S \ref{sec3} we construct $(M,g)$. We also write a simple formula for 
the mean curvature of  symmetric hypersurfaces in it.  We use the formula
to show that $(M,g)$ has barriers. In \S \ref{metric} we prove the horizon 
is a smooth minimal hypersurface possibly with a conical singularity
at the origin.  In \S \ref{topo} we show that the horizon is smooth everywhere and has topology $S^n\times S^{m+1}$.

\section{Preliminaries}\label{sec1}

Let $(M^n,g)$ denote an $n$-dimensional Riemannian manifold with an
asymptotically flat end $E$. For convenience, compactify
all other ends $E_k$ of $M$ by adding the points $\{\infty_k\}$. 
A marginally-trapped region of $M$ (sometimes called marginally outer trapped) 
is an open set $R\subset M$ that contains the points $\{\infty_k\}$, so that
its boundary $\Sigma=\partial R$ is compact, smooth and has nonpositive mean curvature.  
In our notation: $h_g(\Sigma)\le 0$.  The boundary $\Sigma$ of the marginally-trapped region $R$ is
called a marginally-trapped hypersurface.
Let $\mathscr{R}$ denote the set of all marginally-trapped
regions in $M$.

Following Wald \cite{wald84} we define the apparent horizon of $(M,g)$ as
the boundary of the closure of the union of all marginally-trapped regions.  For our purposes we are only interested in the 
 component of the horizon that contains the end $E$.

\begin{defn} \label{wald}
The apparent horizon of $(M,g)$ is the component $\Sigma^*$  of 
 $\partial (\overline{\cup \mathscr{R}})$ that encloses the end $E$.
\end{defn}

The apparent horizon is unique and outermost since
it encloses all connected marginally-trapped hypersurfaces that
 enclose $E$.  On the other hand, the horizon is not necessarily smooth.
Nevertheless, by symmetry and some other considerations,
we will prove that, in our case, 
the
horizon is a smooth minimal hypersurface.\\

A natural question that arises is to determine geometric conditions
that guarantee the presence of an apparent horizon.  
We say that $(M,g)$ has  
positive ends if it has two or more ends, and each end 
of $M$ can be foliated by smooth compact hypersurfaces 
with positive mean curvature.

A standard argument, for which we only sketch a proof, is the following.

\begin{thm}\label{existz}
A manifold with positive ends
has an apparent horizon.
\end{thm}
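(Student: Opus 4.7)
The plan is to verify two things: that the collection $\mathscr{R}$ is nonempty, and that no $R\in\mathscr{R}$ reaches a neighborhood of $E$'s infinity, so that $\partial(\overline{\bigcup\mathscr{R}})$ has a component enclosing $E$.

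For the first step, take a leaf $\Sigma_k$ from the positive mean curvature foliation of each compactified end $E_k$, and let $R_k$ be the region enclosed by $\Sigma_k$ containing $\infty_k$. The outward normal of $R_k$ points away from $\infty_k$, opposite to the foliation's outward normal (which, by the standard convention, points toward $\infty_k$). Hence $h_g(\partial R_k)<0$ with respect to the outward normal of $R_k$, and the disjoint union $R=\bigsqcup_k R_k$ is marginally-trapped, so $\mathscr{R}\neq\emptyset$.

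For the second step, let $\{\Sigma_t\}_{t\ge 0}$ denote the positive mean curvature foliation of $E$, with $\Sigma_t$ approaching $E$'s infinity as $t\to\infty$. By the convention that $\partial R$ separates the trapped region from $E$, an element $R\in\mathscr{R}$ does not contain a neighborhood of $E$'s infinity. I claim that $\partial R\cap E=\emptyset$; by connectedness of $E$ this then forces $R\cap E=\emptyset$. Suppose for contradiction that $\partial R\cap E\neq\emptyset$. Since $\partial R$ is compact and $t>0$ on $\partial R\cap E$, the function $t$ attains a maximum on $\partial R\cap E$ at some interior point $p^*$, at which $\partial R$ is tangent to $\Sigma_{t(p^*)}$ with $\partial R$ locally on the $\{t\le t(p^*)\}$ side. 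I then analyze which side of $\partial R$ contains $R$ near $p^*$. If $R$ lies on the outward $\Sigma$-side at $p^*$, then $R$ has points with $t>t(p^*)$ near $p^*$; the connected open set $\{q\in E:t(q)>t(p^*)\}$ is disjoint from $\partial R$ by the maximality of $t(p^*)$, so the entire set lies in $R$, contradicting the exclusion of $E$'s infinity. If $R$ lies on the inward $\Sigma$-side, the outward normals of $R$ and $\Sigma_{t(p^*)}$ agree at $p^*$, and the strong maximum principle for mean curvature forces $h_g(\Sigma_{t(p^*)})\le h_g(\partial R)$ at $p^*$ with this common normal, contradicting $h_g(\Sigma_{t(p^*)})>0\ge h_g(\partial R)$.

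Hence $\bigcup\mathscr{R}\subset M\setminus E$. The connected component of $M\setminus\overline{\bigcup\mathscr{R}}$ containing $E$ is open with nonempty boundary inside $\partial(\overline{\bigcup\mathscr{R}})$, yielding the required component $\Sigma^*$ enclosing $E$. The main obstacle is the case analysis at $p^*$: one configuration forces $R$ to fill $E$ past the tangent leaf (ruled out by the foliation), and the other is ruled out by the mean curvature comparison via the strong maximum principle; getting the normal orientations and sign conventions right in both cases is the only real subtlety.
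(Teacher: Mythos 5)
Your proof follows the same outline as the paper's sketch: show $\mathscr{R}\neq\emptyset$ using one leaf from each compactified end as a barrier, then apply the maximum principle at the outermost point of tangency with the foliation of $E$ to conclude that no marginally trapped hypersurface enters the positive region. Your two-case orientation analysis at $p^*$ (the topological argument when $R$ sits on the far side of the tangent leaf, and the mean-curvature comparison when it sits on the near side) is a correct and slightly more careful unpacking of the paper's single sentence ``by the maximum principle, the mean curvature of the hypersurface is positive at that point,'' resting on the implicit convention---consistent with Definition~\ref{wald}---that a marginally trapped region does not engulf a neighborhood of $E$'s infinity.
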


\begin{proof}[Sketch of Proof] 
Clearly the set $\mathscr{R}$ is nonempty since the region bounded
by a union of exactly one leaf from each end (except for the noncompactified one) is
marginally trapped.  We claim that no marginally trapped hypersurface can enter the positive
region of the noncompact end, from which it follows that $\overline{\cup\mathscr{R}}\neq M$.
Indeed, suppose that a smooth compact hypersurface
intersects the positive foliation of the noncompact end. At the farthest point with respect
to the foliation, the hypersurface must be tangent to the foliation. By
the maximum principle, the mean curvature of the hypersurface is
positive at that point. Thus it is not marginally trapped.
\end{proof}

\noindent
{\it Note.} In 3 dimensions only spheres can 
be outermost apparent horizons of AF Riemannian manifolds with
nonnegative scalar curvature
by  \cite{meekssimonyau82}.

\section{Existence}\label{sec3}
 
 We begin this section  by  constructing the manifold $(M,g)$.  
 Our choice of metric $g$ is highly symmetric, as we see below.
 This makes the formula for the mean curvature of 
 symmetric hypersurfaces easy to compute. 
 Near the end of this section we use the mean curvature
 formula for symmetric hypersurfaces to show that $M$ 
 has positive ends. The existence of an 
 apparent horizon in $(M,g)$ follows from Theorem \ref{existz}.
 
Fix $m,n\ge 1$.
 $(M,g)$ is,
 basically,  
 $(n+m+2)$-dimensional  Euclidean space minus an $n$-sphere
 endowed with a conformally related 
 metric that blows up on the $n$-sphere.
  
The construction is as follows.  
  Consider the sphere $S^n=S^n\times\{0\}^{m+1}$ sitting inside the first $(n+1)$-dimensional factor 
of $\R^{n+m+2}$.  Let $G_p$ denote the Green's function 
 for the Laplacian around $p\in \R^{n+m+2}$, i.e. the function 
 $G_p(q)=|p-q|^{-(n+m)}$, where $|p-q|$ is  the Euclidean norm. 
 
 For $\epsilon>0$ (which we later
require to be small) our conformal factor 
 $U$ is the smooth positive function defined on $\R^{n+m+2}\setminus S^n$
by the formula
\begin{align}\label{defu}
U(p)=&1+\epsilon ~G_p*\chi_{S^n}
 = 1+\epsilon\int_{S^n}|p-q|^{-(n+m)}d\mu(q).
\end{align}

\begin{defn} We define
$(M,g)$ to be $\R^{n+m+2}\setminus S^n$
endowed with the conformally flat metric $g=U^{4/(n+m)}\delta_{ij}$.
(For simplicity, $\epsilon$ is removed from the definition of $g$.)
\end{defn}


\begin{lemma} \label{asympt} For $(M,g)$ defined as above 
 we have that
\begin{itemize} 
  \item[(i)] $g$ is $SO(n+1)\times SO(m+1)$ invariant, so the
group acts by isometries on $(M,g)$ and fixes both ends.
\item[(ii)] $\Delta_0 U=0$ outside $S^n$, so $(M,g)$  is scalar flat.
\item[(iii)] Let $r(p)=dist(p,S^n)$. For $p$ near $S^n$ we have 
$
 U(p) = 1+ \epsilon r^{-m} + O(r^{1-m}) $ whenever $m>1$. For $m=1, ~ U(p) =1+ \epsilon  r^{-1} + O(\log r) $. 
\item[(iv)] There exists $\alpha=\alpha(n,m)>0$ so that as $|p|\to\infty$, 
$ U(p)=1+\epsilon \alpha |p|^{-(n+m)}+O(|p|^{1-(n+m)}),
$ so spatial infinity  is asymptotically flat.
\end{itemize}
\end{lemma}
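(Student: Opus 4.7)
The plan is to dispatch (i) and (ii) by direct computation, handle (iv) by Taylor expansion at infinity, and devote the bulk of the work to the local asymptotic (iii). For (i), any $\Phi \in SO(n+1) \times SO(m+1)$ preserves Euclidean distance and maps $S^n$ to itself (the $SO(n+1)$ factor acts on the $\R^{n+1}$ block containing $S^n$), so the change of variables $q \mapsto \Phi^{-1} q$ in the defining integral gives $U(\Phi p)=U(p)$; both the conformal factor and the flat metric being $\Phi$-invariant, so is $g$, and both ends are fixed setwise. For (ii), each integrand $q \mapsto |p-q|^{-(n+m)}$ is, up to a multiplicative constant, the fundamental solution of $\Delta_0$ on $\R^{n+m+2}$ (the exponent $n+m = (n+m+2)-2$ matches the standard formula), hence harmonic in $p$ off its pole; differentiation under the integral is justified for $p$ bounded away from $S^n$ by dominated convergence, so $\Delta_0 U = 0$ on $M$, and scalar flatness then follows from the conformal identity $R_g = -c_{n+m+2}\, U^{-(n+m+4)/(n+m)} \Delta_0 U$.

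For (iv), expand $|p-q|^{-(n+m)}$ in powers of $1/|p|$ with $q \in S^n$ bounded: the constant term in $q$ contributes $|p|^{-(n+m)}$, and the linear-in-$q$ term is $(n+m)\, p\cdot q/|p|^{n+m+2}$, which integrates to zero over $S^n$ by the antipodal symmetry of the round sphere. Collecting gives $\int_{S^n}|p-q|^{-(n+m)}d\mu(q) = \alpha|p|^{-(n+m)} + O(|p|^{-(n+m+2)})$ with $\alpha = \mathrm{Vol}(S^n)>0$, from which the stated bound $O(|p|^{1-(n+m)})$ and hence asymptotic flatness follow.

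For (iii), the actual work, let $p_0$ be the nearest point of $S^n$ to $p$ (unique for small $r$) and split the integral over $S^n$ into a neighborhood $B_\delta(p_0)\cap S^n$ and its complement, the latter bounded by $C\delta^{-(n+m)}|S^n|=O(1)$. Using a Fermi chart $y\in B_\delta(0)\subset\R^n$ around $p_0$, the embedded expansion $q=p_0+y+O(|y|^2)$ together with $p=p_0+r\nu$ ($\nu$ a unit vector orthogonal to $T_{p_0}S^n$ in $\R^{n+m+2}$) gives $|p-q|^2 = r^2+|y|^2+H(y,r)$ with $H=O(r|y|^2+|y|^4)$. The rescaling $y=rz$ then produces
\[
I_{\mathrm{near}}=r^{-m}\!\int_{|z|<\delta/r}\!(1+|z|^2)^{-(n+m)/2}\,dz \;+\; \text{corrections from } H,
\]
and the leading integral converges over $\R^n$ precisely because $m\ge 1$ (radially it behaves like $\int |z|^{-m-1}\,d|z|$ at infinity), yielding the $r^{-m}$ singularity with a positive dimensional constant. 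The error comes from truncating the $z$-integration at $\delta/r$ and from Taylor-expanding $(1+|z|^2+\tilde H)^{-(n+m)/2}$ in the perturbation $\tilde H$; for $m>1$ both contribute absolutely convergent integrals giving $O(r^{1-m})$.

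The hard part will be the borderline case $m=1$: subleading integrands of the form $\int_0^{\delta/r}(1+t^2)^{-a}t^b\,dt$ have tails decaying like $t^{-1}$ and produce a genuine $\log(\delta/r)$ divergence; careful bookkeeping of which correction terms do and do not contribute such logarithms is what forces the weaker $O(\log r)$ bound in (iii) for $m=1$, and is the only place where the two-case structure of the statement genuinely intervenes.
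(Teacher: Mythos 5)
Your treatment of (i) and (ii) is the same as the paper's (the paper also invokes harmonicity of the kernel and the conformal transformation law for $R_g$). For (iii) and (iv), however, you take a genuinely different and more self-contained route. The paper disposes of (iii) by citing the appendix of Schoen--Yau and of (iv) by a maximum-principle comparison plus a reference to Bartnik for the asymptotic-flatness bookkeeping; you instead do the near-sphere expansion directly in Fermi coordinates with the rescaling $y = rz$, and at infinity you do a Taylor expansion and kill the linear term by the antipodal symmetry of $S^n$. Your (iv) is cleaner than the paper's, and it also identifies $\alpha = \mathrm{Vol}(S^n)$ explicitly; like the paper, you do not actually verify the derivative decay needed for the asymptotic-flatness clause, so that step is equally sketchy in both.

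The one real gap is that you do not finish (iii) in the case $m=1$: the last paragraph announces, in the future tense, that ``careful bookkeeping'' will produce an $O(\log r)$ bound, but never carries it out. Worse, the stated mechanism looks spurious. Using the exact identity
\begin{equation*}
|p-q|^2 \;=\; \bigl(1 + r\langle \nu, p_0\rangle\bigr)\,|p_0-q|^2 \;+\; r^2,
\qquad p = p_0 + r\nu,\ \nu \perp T_{p_0}S^n,
\end{equation*}
and geodesic polar coordinates $|p_0-q|^2 = 4\sin^2(\phi/2)$ on $S^n$, the near integral reduces to
\begin{equation*}
2^n |S^{n-1}|\, r^{-m}\,\beta^{-n}\!\int_0^{\beta/r}(1+t^2)^{-(n+m)/2}\,t^{n-1}\,\bigl(1 - r^2t^2/\beta^2\bigr)^{(n-2)/2}\,dt,
\qquad \beta = 2\sqrt{1 + r\langle\nu,p_0\rangle}.
\end{equation*}
For $m=1$, the tail of the leading integrand is $t^{-2}$ (not $t^{-1}$), so the truncation error is $O(r)$; the correction from the factor $(1-r^2t^2/\beta^2)^{(n-2)/2}$ is also $O(r)$ after changing variables $s = rt/\beta$; and the far integral is $O(1)$. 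Altogether the error is $O(1)$, which is \emph{stronger} than the $O(\log r)$ stated. So the Lemma is correct as stated (since $O(1)\subset O(\log r)$), but your claim of a ``genuine $\log(\delta/r)$ divergence'' in some subleading integral does not appear to materialize, and the promised bookkeeping is not supplied. To close the gap you should either exhibit the specific integral that produces a logarithm, or, more cleanly, just prove the sharper $O(1)$ bound for $m=1$ as above.
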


\begin{proof} (i) is direct from the definition of $U$.  For (ii), Green's formula
gives that $U$ is harmonic. The transformation law for the scalar curvature under 
conformal deformations is $R_{g}= -U^{-(n+m+4)/(n+m)}(4(n+m+1)/(n+m)
\Delta_0 U-R_0U)$ which implies $R_g=0$.
 (iii) follows from the expansion $U(p)=1+\epsilon\int_{S^n} (p_1^2+\cdots+ p_{m+1}^2+
\sum_{i=1}^{n+1} (p_{i+m+1}-\xi_i)^2)^{-(n+m)/2}d\mu(\xi)$, which is a particular case
of the calculation in the Appendix of \cite{schoenyau79}. The asymptotic formula in
(iv) holds because of the 
maximum principle together with  $u(p)\to 1$ as $|p|\to \infty$, and 
$u(p)\to+\infty$ as $dist(p,S^n)\to0$.  Standard calculations as those
 in \cite{bartnik86} show that the end is asymptotically flat in this case. 
\end{proof}

 The manifold $(M,g)$ is invariant under the action of 
 action of $SO(n+1)\times SO(m+1)$.  
  We are interested in finding a formula for the mean 
 curvature of $SO(n+1)\times SO(m+1)$-invariant 
 hypersurfaces of $(M,g)$ since the apparent
 should also be invariant.  In order to do that, 
 we first find a formula for the mean curvature of
 $SO(n+1)\times SO(m+1)$-invariant hypersurfaces
 of Euclidean space, and then apply the transformation
 law for the mean curvature under conformal deformations.\\

Let $\Sigma\subset (\R^{n+m+2},\delta{ij})$  be an 
$SO(n+1)\times SO(m+1)$-invariant hypersurface.
If we write a vector in $\R^{n+1}\times \R^{m+1}$
as $(x,y)$ respecting this decomposition, then $\Sigma$ has the  form
\begin{equation}\label{sigma}
\Sigma(\gamma)=\{ (x,y) : (|x|,|y|)\in\gamma\},
\end{equation}
where $\gamma$ is some curve in the quadrant  $\R^2_+=\{(a,b)\ge(0,0)\}$.

If  $\gamma=(x(s),y(s))$ is oriented and parametrized by arc length, 
the principal curvatures of $\Sigma(\gamma)$ with respect to an outward 
pointing unit normal are $\kappa, -\dot x/y,\dots, -\dot x/y, \dot y/x,\dots ,\dot y/x$,
where $\kappa$ is the curvature of $\gamma$ with respect to the flat metric. 
 Therefore, the mean curvature of
$\Sigma$ inside Euclidean space is given by
\begin{equation}\label{meancflat}
h_0=\kappa +n\dot y/x-m\dot x/y.
\end{equation}

\begin{ex}\label{example1}
  The coordinate sphere in $\R^{n+m+2}$ is $S_R(0)=
  \Sigma(\gamma),$ where $\gamma=(R\cos s/R,R\sin s/R),~0\le s\le \pi R/2$.  \eqref{meancflat}
  gives  $h_0=(n+m+1)/R$.  
\end{ex}

The transformation law for the mean curvature under 
conformal deformations gives that the mean curvature of a 
hypersurface $N\subset (M,g)$ is 
\begin{equation}
 \label{transfmc}
h=U^{2-c}\left( h_0+
c\partial_\nu\ln U\right),
\end{equation}
where $c=2(n+m+1)/(n+m)$ and $\partial_\nu$ is the Euclidean outward-pointing normal derivative.  

If we 
consider an $SO(n+1)\times SO(m+1)$-invariant hypersurface 
$\Sigma(\gamma)$ like before, and fix
the orientation of $\gamma$ so that the  
outward-pointing normal derivative is given by $\partial_\nu=\dot y \partial_x-\dot x \partial_y$, then the quantity \eqref{transfmc} 
from above is easily
computed using equation \eqref{meancflat}. 
We write this in the following result.

\begin{lemma} Let $\gamma=(x(s),y(s))$ be parametrized by arc length and oriented as above.
Then, the mean curvature of  $\Sigma(\gamma)\subset (M,g)$  is
\begin{equation}\label{meancurv}
 h=  U^{2-c}\left( \kappa+\dot y\left( \frac{n}{x}+c\frac{U_x}{U}
\right) -\dot x\left( \frac{m}{y}+c\frac{U_y}{U}\right) \right),
\end{equation}
where $c=2(n+m+1)/(n+m)$. If $\gamma$ is the graph of $y(x)$,
the mean curvature is given by 
\begin{equation} \label{forgraph}
 h= \pm \frac{U^{2-c}}{\sqrt{1+y'^2}}
 \left( \frac{y''}{1+y'^2}+ y'\left(\frac{n}{x}+c\frac{U_x}{U}\right)
-\left(\frac{m}{y}+ c\frac{U_y}{U} \right)\right).
\end{equation}
\end{lemma}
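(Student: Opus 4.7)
The plan is to derive both formulas by direct substitution into the conformal transformation law \eqref{transfmc}, which already reduces the problem to computing $h_0$ (known from \eqref{meancflat}) and the Euclidean normal derivative $\partial_\nu \ln U$ along $\Sigma(\gamma)$.

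First I would compute $\partial_\nu \ln U$. Since $U$ is $SO(n+1)\times SO(m+1)$-invariant by Lemma \ref{asympt}(i), it descends to a smooth function on $\R^2_+$, for which the partials $U_x, U_y$ in the statement make sense; moreover, at a point of $\Sigma(\gamma)$ of the form $(x\omega_1, y\omega_2)$ with $\omega_1 \in S^n, \omega_2 \in S^m$, the Euclidean gradient $\nabla U$ points in the $(\omega_1, \omega_2)$ direction with magnitudes $U_x$ and $U_y$ in the radial pieces. With the prescribed outward-pointing normal $\partial_\nu = \dot y\,\partial_x - \dot x\,\partial_y$, this gives
\begin{equation*}
\partial_\nu \ln U \;=\; \frac{1}{U}\bigl(\dot y\, U_x - \dot x\, U_y\bigr).
\end{equation*}
Plugging this together with \eqref{meancflat} into \eqref{transfmc} and grouping the coefficients of $\dot y$ and $-\dot x$ yields \eqref{meancurv} immediately.

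For the graph formula \eqref{forgraph}, I would specialize $\gamma$ to the graph of $y=y(x)$. Parametrizing by arc length gives $ds = \sqrt{1+y'^2}\,dx$, so
\begin{equation*}
\dot x \;=\; \pm\frac{1}{\sqrt{1+y'^2}}, \qquad \dot y \;=\; \pm\frac{y'}{\sqrt{1+y'^2}},
\end{equation*}
while the signed Euclidean curvature of the graph is $\kappa = \pm y''/(1+y'^2)^{3/2}$, with a common $\pm$ depending on the choice of orientation. Substituting these expressions into \eqref{meancurv} and factoring $1/\sqrt{1+y'^2}$ out of each of the three terms produces exactly \eqref{forgraph}.

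The computation is essentially bookkeeping, so the only place where care is needed is the sign: the outward-pointing normal $\partial_\nu = \dot y\,\partial_x - \dot x\,\partial_y$ is determined by the orientation of $\gamma$, and the $\pm$ in \eqref{forgraph} reflects whether the graph is traversed left-to-right or right-to-left. Once the orientation is fixed, both formulas follow from the same three ingredients: \eqref{meancflat}, \eqref{transfmc}, and the chain rule applied to $U$ viewed as a function of $(x,y)$.
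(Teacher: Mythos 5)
Your proof is correct and follows exactly the route the paper intends: the paper states the lemma without a written proof, having just remarked that the mean curvature ``is easily computed'' by plugging \eqref{meancflat} and the prescribed normal $\partial_\nu=\dot y\,\partial_x-\dot x\,\partial_y$ into the conformal transformation law \eqref{transfmc}, which is precisely what you do. Your explicit handling of $\partial_\nu\ln U$ via invariance of $U$, and the arc-length substitution $\dot x=\pm 1/\sqrt{1+y'^2}$, $\dot y=\pm y'/\sqrt{1+y'^2}$, $\kappa=\pm y''/(1+y'^2)^{3/2}$ for the graph case, correctly fill in the bookkeeping the paper leaves implicit.
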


A discussion on minimal hypersurfaces of the 
Euclidean  metric that are $SO(n+1)\times SO(m+1)$-invariant
 appears in  \cite{ilmanen1998}.  Radially symmetric minimal 
hypersurfaces of Euclidean space have been studied in 
\cite{gilbargtrudinger01}, \cite{gidasninirenberg79} and \cite{struwe96}.
Singular solutions of the minimal surface equation in flat space appear in \cite{bombieridegiorgigiusti69}, where minimal cones in dimension 8 are constructed as 
examples of minimizers that are singular on the origin.  More recent references include
\cite{caffarellihardtsimon84} and \cite{simon97}.

We now use equation \eqref{meancurv} to compute
the mean curvature of  barriers that foliate the ends
of $M$.  The foliations consist of large coordinate spheres on
the asymptotically flat end,  and small 
tubes around the missing $S^n$ on the other
end.  An essential fact is that the tubes around the missing $S^n$
have topology $S^n\times S^{m+1}$ since they are of the
 form $\partial (S^n\times B^{m+2})$.

\begin{thm} \label{theyare}
$M$ has positive ends.
\end{thm}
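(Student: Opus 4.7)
The plan is to exhibit a positive-mean-curvature foliation on each end of $M$: large coordinate spheres $S_R(0)$ for the asymptotic end, and small Euclidean tubes around the missing $S^n$ for the $S^n$-end. In each case the Euclidean mean curvature comes from \eqref{meancflat}, and the conformal transformation to $(M,g)$ is carried out via \eqref{transfmc} using the asymptotic behavior of $U$ supplied by Lemma \ref{asympt}.

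For the asymptotic end, Example \ref{example1} gives $h_0 = (n+m+1)/R$ on $S_R(0)$. By Lemma \ref{asympt}(iv), $U=1+O(|p|^{-(n+m)})$ with partial derivatives of order $|p|^{-(n+m+1)}$. The transformation law \eqref{transfmc} therefore yields
\begin{equation*}
h \;=\; \frac{n+m+1}{R} \,+\, O(R^{-(n+m+1)}),
\end{equation*}
which is positive for all sufficiently large $R$. The outward normal points toward spatial infinity, giving a positive foliation of this end.

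For the $S^n$-end, take $T_r=\Sigma(\gamma_r)$ with profile curve $\gamma_r(\theta)=(1+r\cos\theta,\,r\sin\theta)$, $\theta\in[0,\pi]$. These tubes have the desired topology $S^n\times S^{m+1}$. Arc-length parametrization gives $\kappa=1/r$, and \eqref{meancflat} gives $h_0 = (1+m)/r + n\cos\theta/(1+r\cos\theta)$. By Lemma \ref{asympt}(iii), $\ln U = -m\ln r + O(1)$ as $r\to 0$; since the Euclidean outward normal $\nu$ (pointing away from $S^n$) satisfies $\partial_\nu r = 1$, one has $\partial_\nu\ln U = -m/r + o(1/r)$. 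Substituting into \eqref{meancurv} and collecting the $1/r$ terms, the angular contribution $n\cos\theta/(1+r\cos\theta)$ is absorbed in the remainder, yielding
\begin{equation*}
h \;\sim\; U^{-2/(n+m)}\,\frac{n-m(n+m+1)}{(n+m)\,r}\qquad\text{as } r\to 0.
\end{equation*}
For $m\ge1$ the numerator $n-m(n+m+1)$ is strictly negative, so $h<0$ with the Euclidean outward normal; reversing the normal, $h>0$.

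The only subtle point is the choice of orientation near $S^n$. Euclidean intuition places ``outward'' away from $S^n$, but since $U$ blows up on $S^n$, the conformal blow-up turns $S^n$ into the asymptotic region of the second end of $(M,g)$; so the outward normal toward the end's infinity actually points \emph{toward} $S^n$. With this orientation the computation above gives a positive-mean-curvature foliation $\{T_r\}$ of the $S^n$-end for all sufficiently small $r$, completing the verification that $M$ has positive ends.
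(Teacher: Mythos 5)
Your proposal is correct and follows essentially the same route as the paper: both exhibit the positive foliation by small tubes $T_r=\Sigma(\gamma_r)$ around the missing $S^n$ and by large coordinate spheres at infinity, compute $h$ via the conformal transformation law and the asymptotics of $U$ from Lemma \ref{asympt}, and observe that the orientation toward the second end's infinity points toward $S^n$. Your leading coefficient $(1+m)/r$ in $h_0$ is the correct one (and the combined coefficient $(n-m(n+m+1))/(n+m)$ is indeed negative for $m\ge 1$); the paper's displayed $(m-1)$ appears to come from a sign slip in $\kappa$, but the conclusion is the same, and your simplification of replacing the exact $(1+r^m/\epsilon)^{-1}$ factor by its $r\to 0$ limit is harmless since only positivity for sufficiently small $r$ is needed.
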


\begin{proof} 
We begin by foliating a neighborhood of the missing $S^n$ by the tubes
$T_r=\Sigma(\gamma_r),$ where $\gamma_r=(1+r\cos (s/r),r\sin(s/r))$ for $r>0$ small, 
$s\in[0,\pi r]$. 
Using the asymptotics of Lemma \ref{asympt} (iii) and plugging $\gamma_r$
into \eqref{meancurv} gives 
\begin{align*}
 h(T_r)= & -U^{2-c}
 \left(\frac{1}{r}\left[(m-1)-\frac{2m(n+m+1)}{(n+m)(1+{r^{m}}/{\epsilon})}\right]
+\frac{n\cos(s/r)}
{1+r\cos(s/r)}\right)\\
 &+\mbox{lower order terms}\notag,
\end{align*}
which is a positive quantity for $0<r<\epsilon^{1/m}$ and
$\epsilon>0$ small enough. (For this calculation we reverse 
the sign of equation \eqref{meancurv} since the outward-pointing
normal of the tubes $T_r$ points towards the missing sphere.)\\
On the other hand,   the mean curvature of all large enough coordinate spheres $S_R(0)=\Sigma(\gamma_R)$,
where $\gamma_R= (R\cos (s/R),R\sin(s/R))$ for $t\in[0,\pi R/2]$ is positive since $M$ is asymptotically flat.  Indeed,
it follows from example \ref{example1}, equation \eqref{meancurv} and  Lemma \ref{asympt} (iv) that
$h(S_R)=O(1)( (n+m+1)/R-\epsilon \cdot O(R^{-(n+m+2)}))$.
The right-hand side of this equation is 
positive for all $R$ large enough, and continues to be so for all small
values of $\epsilon>0$.  
\end{proof}

\begin{cor}
$M$ has an apparent horizon. 
\end{cor}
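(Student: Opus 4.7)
The plan is entirely straightforward: this is a direct concatenation of the two preceding results. Theorem \ref{theyare} establishes that $(M,g)$ has positive ends, meaning each of its two ends admits a smooth foliation by compact hypersurfaces of positive mean curvature (the tubes $T_r$ around the removed $S^n$ on one end, and the large coordinate spheres $S_R(0)$ on the asymptotically flat end). Theorem \ref{existz} then provides the abstract mechanism that turns this geometric hypothesis into existence of an apparent horizon.

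More explicitly, I would simply invoke Theorem \ref{existz} with the manifold $(M,g)$. To make the argument fully self-contained within the proof, I might briefly recall the two ingredients needed by Theorem \ref{existz}: first, the collection $\mathscr{R}$ of marginally-trapped regions is nonempty because the region bounded by any one tube $T_r$ (with $0<r<\epsilon^{1/m}$ and $\epsilon$ small) is marginally trapped, since that tube has negative mean curvature with respect to the outward normal pointing away from the removed sphere; second, no marginally-trapped hypersurface can reach into the positive-mean-curvature foliation at spatial infinity, by the maximum principle argument sketched in the proof of Theorem \ref{existz}. Consequently $\overline{\bigcup\mathscr{R}}\neq M$ and the boundary component enclosing the asymptotically flat end $E$ is, by Definition \ref{wald}, the apparent horizon.

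There is essentially no obstacle here; all the work has been moved into Theorem \ref{theyare} (the mean-curvature computation for the symmetric barriers) and Theorem \ref{existz} (the general barrier existence argument). The corollary is a one-line application. The only thing worth flagging is that Theorem \ref{theyare} is stated in the regime where $\epsilon>0$ is chosen sufficiently small, so the corollary is implicitly asserted under that same smallness assumption, which is fine since $\epsilon$ is a free parameter in the construction of $(M,g)$.
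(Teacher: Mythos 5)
Your proposal is exactly what the paper intends: the corollary has no written proof because it is the immediate concatenation of Theorem \ref{theyare} ($M$ has positive ends) with Theorem \ref{existz} (positive ends imply the existence of an apparent horizon). Your optional recap of the ingredients is consistent with the paper's sketch and adds nothing problematic.
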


\begin{remark} During a conversation with
Hubert Bray we realized that $U$ defined as above makes 
the quantity $h(T_r)$ of
Theorem \ref{theyare} be positive. This observation motivated
much of our work. 
\end{remark}
 
\section{Regularity}\label{metric}
 
 In this section we prove that the horizon is a smooth 
minimal hypersurface except possibly at the origin, 
where it may have a conical singularity.  This last
option is ruled out in the next section.  

Recall that
given a curve $\gamma\subset \R^2_+$, the set 
$\Sigma(\gamma)$ is defined as the subset
$\Sigma(\gamma)=\{(|\vec x|,|\vec y|)\in\gamma\}\subset \R^{n+m+2}$.
We  will see that the apparent horizon is given by $\Sigma(\gamma^*)$,
where $\gamma^*$ is a particular curve.  As a matter of fact,
from what we have done
so far we can already say a few things about $\gamma^*$, assuming
this decomposition holds.  For example, the horizon is contained in a compact set,
therefore $\gamma^*$ should  be contained in a bounded region. The horizon encloses the missing sphere,
therefore $\gamma^*$ should separate $\R^2_+$ in two regions, one 
containing all the tubular barriers of Theorem \ref{theyare}.  
This motivates the following definition.
Let $N_\epsilon\subset \R^2_+$ denote a fixed (depends only on $\epsilon$) 
semicircular marginally trapped region that is a neighborhood of $(1,0)$.
(This is possible by the proof of Theorem \ref{theyare}.)

\begin{defn} Let $\gamma:(-L_1,L_2)\to \R^2_+, ~ 0\le L_1,L_2\le \infty$ be a curve parametrized
by arc length.  We say that $\gamma$  
 is minimal if it is smooth, $h(\Sigma(\gamma))=0$, and
  $int(\R^2_+)-\gamma$ has two components, 
one of them being bounded and containing  $N_\epsilon$ from 
above.
\end{defn}

\begin{thm}[Interior regularity]
The apparent horizon is 
$\Sigma^*=\Sigma(\gamma^*)$, where $\gamma^*$  (called the horizon curve) is a minimal curve.
\end{thm}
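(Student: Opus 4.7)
The plan is to identify the apparent horizon as the outermost area-minimizing hypersurface caught between the inner tubular barriers $T_r$ and the outer coordinate spheres $S_R(0)$ from Theorem~\ref{theyare}, and then to exploit the cohomogeneity-two isometric action to collapse the problem to one smooth planar curve $\gamma^*\subset\R^2_+$.

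First I would reduce to $\Sigma^*=\Sigma(\gamma^*)$ via symmetry. By Lemma~\ref{asympt}(i) the group $G=SO(n+1)\times SO(m+1)$ acts by isometries on $(M,g)$ and preserves the asymptotically flat end $E$. Since the component of $\partial\overline{\cup\mathscr{R}}$ enclosing $E$ is canonically determined by $(M,g)$ and $E$, every isometry of the pair carries $\Sigma^*$ to itself, so in particular $G\cdot\Sigma^*=\Sigma^*$. Every $G$-invariant closed set in $M$ is the preimage of a closed subset of the orbit quotient $\R^2_+=M/G$ under $(x,y)\mapsto(|x|,|y|)$, and thus $\Sigma^*=\Sigma(\gamma^*)$ for some closed $\gamma^*\subset\R^2_+$. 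The tubes $T_r$ are marginally trapped for all small $r$ by Theorem~\ref{theyare}, hence enclosed by $\overline{\cup\mathscr{R}}$, and this forces $\gamma^*$ to separate $\R^2_+$ into two components with the semicircular region $N_\epsilon$ sitting in the bounded one.

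Next, to extract minimality I would run a standard constrained minimization: for fixed $0<r\ll 1$ and $R\gg 1$, minimize the $g$-perimeter of Caccioppoli sets $F$ lying in the region between $T_r$ and $S_R(0)$ whose complement contains the missing $S^n$. Theorem~\ref{theyare} arranges that both barriers have positive mean curvature pointing away from the competition class, so by the strict maximum principle no minimizer $\Sigma_{r,R}$ can touch them, and sending $r\downarrow 0$, $R\uparrow\infty$ produces an outermost $G$-invariant minimal hypersurface that coincides with $\Sigma^*$ by the outermost characterization underlying Theorem~\ref{existz}. This yields $h(\Sigma(\gamma^*))=0$.

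Finally, for the smoothness of $\gamma^*$ on its open parameter interval I would invoke Federer-Almgren regularity: $\Sigma^*$ is smooth off a closed set of Hausdorff codimension at least seven. This singular set is $G$-invariant, hence a union of orbits; principal orbits always lie at smooth points, so any singular orbit must be nonprincipal, i.e.\ project to a point on one of the axes of $\R^2_+$ or to the origin. Standard cohomogeneity-two arguments (cf.\ \cite{ilmanen1998}) show that a $G$-invariant minimal hypersurface can only meet the interior of either axis orthogonally, which makes the corresponding orbit a smooth point of $\Sigma^*$, leaving the origin of $\R^2_+$ as the only possible locus of a genuine singularity. Since the minimal-curve parametrization is defined on an open interval, this potential singularity sits at a missing endpoint and $\gamma^*$ is smooth on $(-L_1,L_2)$. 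The main obstacle is precisely this last step: carefully translating the codimension-seven GMT regularity through the cohomogeneity-two reduction to pin down that any residual singularity of $\gamma^*$ must concentrate at the origin of the quadrant, since the conical-singularity possibility itself is then dealt with only in the next section.
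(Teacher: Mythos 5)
Your proposal is essentially the paper's proof, differing only in how much detail you spell out at each stage. The symmetry reduction to $\Sigma^*=\Sigma(\gamma^*)$, the appeal to Federer--Almgren codimension-seven regularity for area-minimizing currents, and the observation that interior singularities of $\gamma^*$ would force codimension-two singular orbits of $\Sigma^*$ in $M$ (a contradiction) are exactly the ingredients the paper uses. Your orbit-type language (``principal orbits cannot lie in the singular set'') is just a slightly more structured way of phrasing the paper's one-line codimension count.

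Two small points worth noting. First, for minimality you run an explicit constrained minimization between the two families of barriers and then identify the limit with $\Sigma^*$; the paper instead argues directly that Wald's apparent horizon outer-minimizes area in the homology class of a marginally trapped boundary (using $h\le 0$ there) and hence is a minimizing current. These routes are equivalent in spirit, but both lean on the standard, not entirely trivial, identification of Wald's $\partial(\overline{\cup\mathscr{R}})$ with the outermost area-minimizing hypersurface; you invoke it via ``the outermost characterization underlying Theorem~\ref{existz},'' the paper via ``a standard argument.'' Neither of you spells it out, so you are no less rigorous than the source, but be aware this is where the real GMT content sits. Second, your remarks about orthogonal intersection with the open axes and confinement of any singularity to the origin are correct but belong to the paper's Theorem~\ref{regu} rather than to this theorem, which only claims smoothness of $\gamma^*$ on the open parameter interval; you correctly recognize this at the end, so it does not create a gap, merely some proof material that is out of scope for the present statement.
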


\begin{proof}
Since $(M,g)$ is $SO(n+1)\times SO(m+1)$-invariant  it follows that
the horizon also is.  Therefore, there is a 
rectifiable curve $\gamma^*$  with $\Sigma^*=\Sigma(\gamma^*)$, and
it follows that $\gamma^*$ is bounded since the horizon is compact. \\
{\it Claim.} $\gamma^*$ is smooth in the interior of $\R^2_+$.\\
Indeed, the apparent horizon outer-minimizes area in the homology class of
a marginally trapped hypersurface. Since the mean curvature of the
marginally trapped hypersurface is nonpositive, a standard argument
shows that the horizon is actually a minimizing
current.  A well known result of geometric measure theory (see e.g. \cite{simon83})
is that the singular set of minimizing currents has codimension 7 or greater. 
On the other hand, singularities of $\gamma^*$ in the interior of 
$\R^2_+$ translate into codimension 2 singularities of 
the horizon $\Sigma^*=\Sigma(\gamma^*)$, which proves the claim.\\
To finalize, recall that a minimizing current is stationary, so
wherever the horizon is smooth it must be a minimal hypersurface.
\end{proof}

 \begin{remark}
 Whenever  $m+n+2\le 7$ the above theorem gives that the
 horizon is a smooth minimal hypersurface everywhere. If
 $m,n\le 6$, the horizon may only have a conical singularity 
 at the origin.   Indeed, since the 
 horizon is minimizing it has codimension 7 
 singularities. If $n+m+2\le 7$,  the singular set of the horizon
 can only be empty. For the other case, note that by symmetry
 the singular set can only have  dimension $1,n$ or $m$. This
 way, if  $n,m\le 6$, the singular set may only be 1-dimensional,
 so it must be a conical singularity at the origin (see Theorem \ref{regu}
 below). 
 \end{remark}

From equation \eqref{meancurv} it follows that the condition
   $h(\Sigma(\gamma))=0$ for minimal
 curves translates into the following equation for the curve
$\gamma(s)=(x(s),y(s)):$
 \begin{equation}\label{mse}
\kappa+\dot y\left( \frac{n}{x}+c\frac{U_x}{U}
\right) -\dot x\left( \frac{m}{y}+c\frac{U_y}{U}\right)=0 ~\mbox{ on }~int(\R^2_+).
\end{equation}
 
Solutions to this equation are graphical almost everywhere with respect to both
axes.

\begin{lemma} \label{graphi}
Except at isolated vertical (resp. horizontal) 
points, a curve that solves \eqref{mse} is locally a graph over the $x$-axis (resp. $y$-axis).
\end{lemma}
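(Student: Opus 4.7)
The plan is to exploit real-analyticity of solutions to the minimal surface ODE system: I would recast \eqref{mse} as a first-order autonomous analytic system, conclude that the set of vertical points of $\gamma$ is either discrete or constitutes an entire vertical segment, and then rule out the latter via the asymptotic behavior of $U$.

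Concretely, parameterize $\gamma$ by arc length and introduce the tangent angle $\theta(s)$ so that $\dot x=\cos\theta$, $\dot y=\sin\theta$, and $\kappa=\dot\theta$. Then \eqref{mse} reads
\begin{equation*}
\dot\theta = \cos\theta\, B(x,y) - \sin\theta\, A(x,y), \qquad A = \frac{n}{x}+c\frac{U_x}{U}, \quad B = \frac{m}{y}+c\frac{U_y}{U}.
\end{equation*}
Together with $\dot x=\cos\theta$ and $\dot y=\sin\theta$, this is an autonomous first-order ODE system for $(x,y,\theta)$ whose coefficients are real-analytic on $int(\R^2_+)$ away from the image $(1,0)$ of $S^n$ (the conformal factor $U$ defined by \eqref{defu} is real-analytic off $S^n$ because its integrand is real-analytic in $p$ there and the measure is compactly supported). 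By the analytic ODE theorem, every solution $\gamma$ is real-analytic in $s$.

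A vertical point of $\gamma$ is a zero of $\dot x(s)=\cos\theta(s)$. If such zeros accumulated at some $s_0$, then by real-analyticity $\cos\theta\equiv 0$ on the maximal interval of definition, so $\theta\equiv\pm\pi/2$ by continuity and $\gamma$ is an actual vertical segment $x\equiv x_0$. Feeding $\dot x=0$, $\dot y=\pm 1$, $\kappa=0$ into \eqref{mse} leaves the constraint
\[ \frac{n}{x_0}+c\,\frac{U_x(x_0,y)}{U(x_0,y)}=0 \]
for every $y$ on the segment. Since $\gamma\subset int(\R^2_+)$ stays in $y>0$, the line $\{x=x_0,\,y>0\}$ misses $(1,0)$, so $U(x_0,\cdot)$ is real-analytic on $(0,\infty)$ and the identity analytically continues to all $y>0$. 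Letting $y\to\infty$, the expansion of \eqref{defu} (cf.\ Lemma \ref{asympt}(iv)) gives $U(x_0,y)\to 1$ and $U_x(x_0,y)\to 0$, forcing $n/x_0=0$, a contradiction. The case of horizontal points is handled by swapping the roles of $x$ and $y$.

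The main obstacle I foresee is making the analytic-continuation step airtight: one needs that $y\mapsto \frac{n}{x_0}+c\,U_x(x_0,y)/U(x_0,y)$ is real-analytic on a connected domain containing both the putative segment and arbitrarily large $y$, not merely smooth. This reduces to showing that \eqref{defu} defines a real-analytic (not merely $C^\infty$) function on $\R^{n+m+2}\setminus S^n$, which follows from the real-analyticity of $|p-q|^{-(n+m)}$ in $p$ for $p\notin S^n$, combined with a dominated-convergence argument on a complex neighborhood of each such $p$; once this is in hand the rest of the proof is routine.
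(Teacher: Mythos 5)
Your proposal is correct and rests on the same core contradiction as the paper's proof: a vertical segment $\{x \equiv x_0\}$ would force the identity $n/x_0 + c\,U_x/U = 0$ along it, which is impossible. Where you go further is in making two steps airtight that the paper leaves implicit. First, the paper only argues that no vertical \emph{segment} occurs, whereas the lemma asserts that vertical \emph{points} are isolated (i.e., cannot accumulate); your passage through real-analyticity of $(x(s),y(s),\theta(s))$ closes that gap by upgrading ``zeros of $\cos\theta$ accumulating at $s_0$'' to ``$\cos\theta \equiv 0$.'' Second, the paper simply asserts that $U_x/U$ is never constant on a vertical segment, while you prove it by analytic continuation of the identity along $\{x = x_0,\ y > 0\}$ (which misses $(1,0)$) and then sending $y \to \infty$, where $U \to 1$ and $U_x \to 0$ by the asymptotics of Lemma~\ref{asympt}(iv), forcing $n/x_0 = 0$. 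Both refinements are genuinely needed. The one point you flag as a potential obstacle---real-analyticity of $U$ off $S^n$---does go through: for fixed $p \notin S^n$, the kernel $q \mapsto |p-q|^{-(n+m)}$ extends holomorphically in $p$ to a complex neighborhood uniformly over the compact $S^n$, so the integral defining $U$ inherits analyticity, and the rest of your argument is routine.
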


\begin{proof}
Suppose $\dot x(s)=0$ for all $s\in(s_0,s_1)$. Writing equation
\eqref{mse} for $s_0<s<s_1$ gives $n/x_0+cU_x/U=0,$  where $x_0=x((s_1+s_2)/2)$.
This is impossible since $U_x/U$ is never constant on vertical segments. A similar
argument works for the horizontal case.
\end{proof}

  We will show that the apparent horizon is a minimal 
 hypersurface smooth everywhere except possibly at the
 origin, where it may have a conical singularity.
 (We remove this last possibility in the next section.)
In order to prove this statement we need the following
 ODE result.  Recall that
$N_\epsilon\subset \R^2_+$ is the marginally trapped neighborhood of $(0,1)$ 
from before.

\begin{prop}\label{foli} 
Smooth solutions of equation \eqref{mse} with an endpoint on the axes
minus $N_\epsilon$ exist uniquely. This is, if $\gamma$ is smooth and solves equation
\eqref{mse} with $\gamma(0)\in (axes-N_\epsilon)$, then $\gamma$ is unique.
Furthermore, $\dot\gamma(0)$ is perpendicular to the axis, unless $\gamma(0)=(0,0)$,
in which case  the angle is $\tan^{-1}\sqrt{m/n}$.  
It follows that there exists a neighborhood $O_\epsilon$ of the axes inside  $\R^2_+-N_\epsilon$ that
is foliated by these unique solutions. (See Figure 1.)
\end{prop}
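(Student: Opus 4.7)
The main subtlety is that \eqref{mse} is a singular ODE at the axes --- the coefficients $n/x$ and $m/y$ blow up there --- so I analyze three kinds of endpoint separately (interior of the $y$-axis, interior of the $x$-axis, and the corner $(0,0)$), identify the boundary behavior forced by smoothness, and invoke regular-singular ODE theory in each case. Smooth dependence of the resulting solutions on the starting point will then give the foliation. Throughout, $U$ is smooth everywhere except on the missing $S^n$ (which in $(x,y)$ coordinates is the single point $(1,0)$), and $U_x$, $U_y$ vanish on the $y$-axis and the $x$-axis respectively by the $SO(n+1)\times SO(m+1)$-symmetry; in particular $U$ and its derivatives are well-behaved at every endpoint considered here.

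For an endpoint $\gamma(0) = (0, y_0)$ with $y_0 > 0$, the quantities $\kappa$, $m/y + cU_y/U$, and $cU_x/U$ in \eqref{mse} are bounded near $s = 0$, so the singular term $n\dot y / x$ must also be bounded. Since $x(s) \sim \dot x(0)\, s$, this forces $\dot y(0) = 0$ and hence $\dot x(0) = \pm 1$, i.e.\ perpendicular entry. Writing $\gamma$ locally as a graph $y = y(x)$ with $y'(0) = 0$, equation \eqref{forgraph} takes the form $y''/(1+y'^2) + n y'/x = F(x, y, y')$ with $F$ smooth. Multiplying by $x^n$ puts the principal part in divergence form, and a contraction-mapping argument on the integral equation obtained by integrating from $0$ gives existence and uniqueness of a smooth solution; $y''(0)$ is determined by evaluating the equation at $x = 0$ via a Taylor expansion. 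The symmetric case $\gamma(0) = (x_0, 0)$ with $x_0 > 0$ outside $N_\epsilon$ is handled identically with $x \leftrightarrow y$ and $n \leftrightarrow m$.

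For $\gamma(0) = (0,0)$, write $\gamma(s) = (as, bs) + O(s^2)$ with $a^2 + b^2 = 1$; the $O(1/s)$ part of \eqref{mse} gives the balance
$$\frac{nb}{a} - \frac{ma}{b} = 0, \qquad \text{i.e.,} \qquad \tan\theta = \sqrt{m/n},$$
fixing the entry angle. Substituting the ansatz $\gamma(s) = s(a,b) + s^2 \Gamma(s)$ into \eqref{mse} reduces it to a regular-singular ODE for $\Gamma$ at $s = 0$ whose indicial analysis has a unique smooth branch, yielding a unique smooth $\gamma$. Finally, smooth dependence on the endpoint together with the uniqueness just shown implies that the solutions form a non-intersecting smooth family sweeping out a neighborhood $O_\epsilon$ of the axes inside $\R^2_+\setminus N_\epsilon$, giving the foliation.

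The main obstacle throughout is the singular behavior of \eqref{mse} at the axes, particularly at the origin where both singular coefficients compete and force the specific entry angle $\tan^{-1}\sqrt{m/n}$. Once the correct boundary data is identified in each case, the rest reduces to a fairly standard regular-singular ODE argument.
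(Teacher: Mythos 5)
Your proposal is correct and follows the same route as the paper: reduce to the graphical equation \eqref{mingraph} via Lemma \ref{graphi}, force the boundary condition $y'(0)=0$ (or slope $\sqrt{m/n}$ at the origin) by balancing the singular $n\dot y/x$ and $m\dot x/y$ terms against the bounded ones, and run a singular-ODE existence/uniqueness argument. The paper states the ODE analysis very tersely; you supply the standard mechanisms (the $x^n$ integrating factor, the linear ansatz at the corner and its indicial balance), which is exactly the content the paper's "ODE analysis" phrase is gesturing at.
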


\begin{proof}
From Lemma \ref{graphi}, solutions of equation \eqref{mse} with
an endpoint on the $y$-axis are graphical over
the $x$-axis for a short while.  In that case, equation \eqref{mse}
becomes 
\begin{equation}\label{mingraph}
 y''/(1+y'^2)+y'(n/x+cU_x/U)-(m/y+cU_y/U)=0.
\end{equation} 
An ODE analysis of this equation gives local existence
of smooth solutions. By inspecting the equation it follows that 
the condition $y'(0)=0$ (or $y'(0)=\sqrt{m/n}$ when $y(0)=(0,0)$) is required.
The same can be done for solutions that intersect the $x$-axis
outside $N_\epsilon$.
\end{proof}

\begin{figure}
\begin{center}
 \includegraphics[height=1.5in]{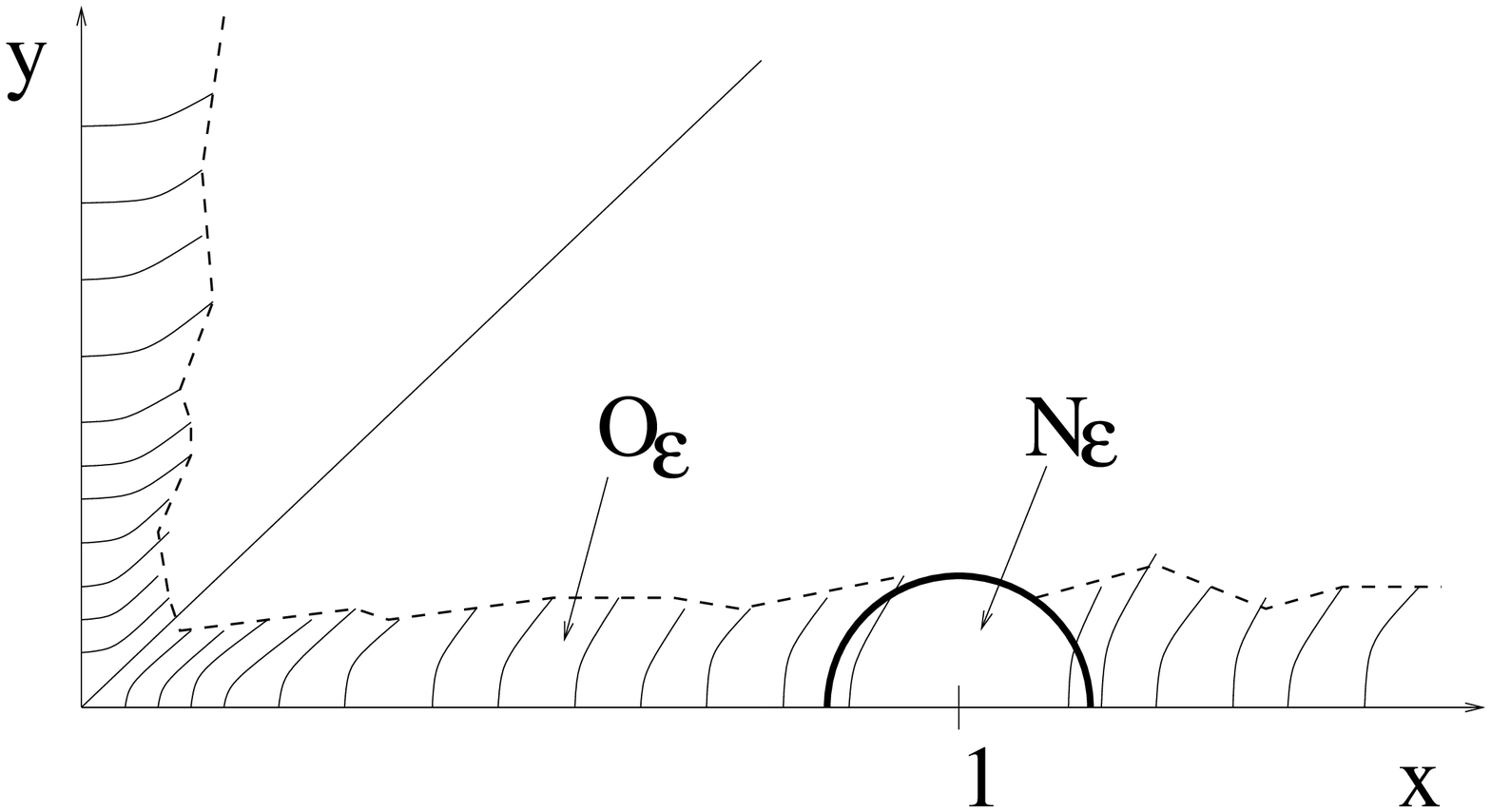} 
\end{center}
Figure 1. The neighborhood $O_\epsilon\subset \R^2_+-N_\epsilon$  is foliated by solutions of \eqref{mse}. 
\end{figure}

Using the ODE proposition we prove the main result of this section.

\begin{thm}[Regularity] \label{regu}
The apparent horizon of $(M,g)$ is a minimal hypersurface 
smooth everywhere except possibly at the origin, where it may 
have a conical singularity.
\end{thm}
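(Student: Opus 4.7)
The plan is to combine the interior regularity already established for $\gamma^*$ with the ODE foliation from Proposition \ref{foli}, using ODE uniqueness to force $\gamma^*$ to coincide with a leaf of the foliation in a neighborhood of each of its axis-endpoints.

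First I would reduce to the interesting case. The horizon curve $\gamma^*$ is bounded and smooth in $\text{int}(\R^2_+)$. If $\bar\gamma^*$ is disjoint from the axes, neither sphere factor of the symmetry collapses along $\Sigma(\gamma^*)$ and the horizon is trivially a smooth minimal hypersurface. Otherwise $\bar\gamma^*$ meets the axes at finitely many points $p_1,\dots,p_k$; since $N_\epsilon$ is contained in the region enclosed by $\gamma^*$, each $p_i$ lies on the axes outside $N_\epsilon$, which is exactly where Proposition \ref{foli} produces the foliation $O_\epsilon$.

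The core step is an ODE-uniqueness argument inside $O_\epsilon$. By Proposition \ref{foli}, $O_\epsilon$ is foliated by smooth minimal leaves meeting the axes perpendicularly, except at the origin where the angle is $\tan^{-1}\sqrt{m/n}$. Since $\gamma^*$ is smooth and satisfies \eqref{mse} on $O_\epsilon \cap \text{int}(\R^2_+)$, and since \eqref{mse} has unique smooth solutions through each interior point, any connected arc of $\gamma^*$ inside $O_\epsilon$ must locally coincide with a leaf; by continuity this arc is an entire leaf and extends smoothly up to its endpoint on the axis with the prescribed angle. The minimizing-current property of $\gamma^*$ controls the possible oscillatory behavior near the axes (finite local density and no unbounded wandering), so this leafwise extension accounts for the entire approach of $\gamma^*$ to each $p_i$.

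Finally I would translate the foliation angle into the geometric conclusion for $\Sigma^*$. At $p_i \ne (0,0)$ the perpendicular intersection of the leaf with the axis is exactly the condition that the collapsing sphere factor of $\Sigma(\gamma^*)$ close off in a regular quadratic manner, so $\Sigma^*$ is smooth at the corresponding orbit. At $p_i=(0,0)$ the angle $\tan^{-1}\sqrt{m/n}$ identifies the tangent cone with the classical minimal cone over $S^n(\sqrt{n/(n+m)}) \times S^m(\sqrt{m/(n+m)})$ in $\R^{n+m+2}$, giving at worst a conical singularity at the origin. The principal obstacle I expect is the passage from interior regularity to axis regularity, that is, certifying that $\gamma^*$ approaches each endpoint with a well-defined tangent direction rather than oscillating; this is precisely what the foliation together with ODE uniqueness supplies.
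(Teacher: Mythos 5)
Your strategy — "interior regularity $+$ foliation $+$ ODE uniqueness forces $\gamma^*$ to coincide with a leaf" — has a genuine gap at exactly the step that carries the weight of the theorem. Equation \eqref{mse} is a second-order ODE, so through any interior point of $O_\epsilon$ there is a one-parameter family of solutions indexed by tangent direction; the foliation of Proposition \ref{foli} picks out only those solutions that hit the axis perpendicularly (or at $\tan^{-1}\sqrt{m/n}$ at the origin). Uniqueness therefore forces $\gamma^*$ to equal a leaf only if, at some point, $\gamma^*$ has the \emph{same tangent direction} as the leaf through that point. Nothing in your argument supplies that tangency: a priori $\gamma^*$ could cross leaves transversally all the way to the axis, or approach the axis with an infinite slope, or oscillate in $x$ so that it never extends continuously to the boundary. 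Your sentence claiming the minimizing-current property "controls the possible oscillatory behavior near the axes (finite local density and no unbounded wandering)" is precisely where the real work of the proof is, and you have not said how this control is obtained.

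The paper's proof handles this very differently. It first uses explicit cut-and-paste area comparisons to show the minimizer cannot oscillate: if $x(s)$ has infinitely many local minima approaching the $y$-axis, chopping off the curve at such a minimum and replacing the tail by a horizontal segment to the axis strictly decreases the area $\int x^n y^m U^{2c}\,ds$ while still enclosing the barrier, contradicting outer-minimization. Similarly, if $\dot y$ vanishes at infinitely many parameter values near the endpoint, the curve becomes tangent to a leaf and then is a leaf by ODE uniqueness — that tangency is forced by the $\dot y = 0$ condition, which is what your proposal skips. This yields eventual monotonicity, finiteness of $L_2$, and a continuous extension. Only then does it pass to the graphical ODE \eqref{mingraph} to show $y'(0)\in\{0,\pm\infty\}$, ruling out $\pm\infty$ by another cut-off/area argument. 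Finally, the origin is treated not through the foliation at all but by a parabolic blow-up: rescaling shows the tangent object at the origin solves the Euclidean cone equation $\kappa + n\dot v/u - m\dot u/v = 0$, whose scale-invariant solutions are lines, giving the cone angle $\tan^{-1}\sqrt{m/n}$. You should supply the area-comparison arguments that give the endpoint existence and the tangency before invoking uniqueness, and either adopt the blow-up at the origin or justify why the foliation argument applies there.
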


\begin{proof} The main idea of the proof is to show that if $\gamma$ is
a minimal curve  that does not match one of the smooth solution curves
that foliate $O_\epsilon$ (from Proposition \ref{foli}) as it approaches the boundary, then 
$\Sigma(\gamma)$ cannot be an area minimizer. We will show this for 
a curve that has an endpoint in the $y$-axis.  The analysis for 
endpoints on the $x$-axis (outside $N_\epsilon$) is analogous.\\
Let $\gamma=(x,y):(-L_1,L_2)\to\R^2_+$ be a minimal curve.  Without loss
of generality we analyze the behavior at $L_2$ only, where we have $\liminf_{s\to L_2} x(s)=0$. \\
{\it Claim 1.} If $\Sigma(\gamma)$ outer-minimizes area in the homology class of a barrier, then
$L_2<\infty$, $y(s),x(s)$ are eventually monotonic as $s\to L_2$ and extend continuously to $(-L_1,L_2]$\\
Indeed, if $\dot y(s_i)=0$ for infinitely many $s_i\to L_2$, then the curve must be tangent to some leaf 
of the foliation of $O_\epsilon$.  By uniqueness of second order ODEs it follows that
$\gamma$ becomes the leaf itself. This gives that $y$ is eventually monotonic and that
$\lim_{s\to L_2}y(s)$ exists and is finite. Now suppose
$x(s)$ has infinitely many local minima as $s\to L_2$, which are arbitrarily
close to the $y$-axis.  By cutting off the curve
at one of these points (that is close enough to the axis) and replacing that segment
by a horizontal line up to the axis, the length of the curve decreases.  
Furthermore, the
cutoff curve continues to enclose the barrier and
the area of $\Sigma(\gamma)$ decreases 
(since the area form is  $x^ny^mU^{2c}ds$). Thus we have found a curve enclosing
the barrier with less area than the horizon --but this is impossible since $\Sigma(\gamma)$ outer-minimizes. 
We deduce that both $x(s), y(s)$ are eventually monotonic. 
Since the curve is bounded and parametrized by arc length, it follows that $L_2$ must be 
finite. This way, the curve extends continuously to $(-L_1,L_2]$ since it is Lipschitz. \\  
From the above argument it follows that $L_1,L_2<\infty$ and $\gamma$ is continuous
on $[0,L]$, where $L=L_1+L_2$. This proves the claim.\\
{\it Claim 2.} $\gamma$ meets the axes perpendicularly.\\
From Lemma \ref{graphi}, $\gamma$ is locally
a graph. In particular, around its endpoint on the $y$-axis, $\gamma$ is the graph of $y(x)$ 
over the interval $[0,\delta)$ for some $\delta>0$.   
Assume initially that $y(0)>0$. An analysis of equation \eqref{mingraph} gives that $y'(0)$ 
exists and is either zero or $\pm $ infinity.  If $y'(0)=\pm \infty$ then $\gamma$ cannot
minimize by a cut-off argument like the one above. This way, it follows that
 $\Sigma(\gamma)$ is smooth so long
as $\gamma$ does not go through the origin.  Nevertheless, in that case, we have that\\
{\it Claim 3.} If $\gamma(0)=(0,0)$ then $\gamma$ meets the origin at an angle $\tan^{-1}\sqrt{m/n}$.\\
To prove this last part we make a blow up argument around the origin. Consider the 
rescaled curve $\gamma_\lambda=(u(s),v(s))=(\lambda x(s/\lambda),\lambda y(s/\lambda))$.  
Multiplying equation \eqref{mse} by $1/\lambda$ and evaluating at 
$s/\lambda$, we get that,
in the limit $\lambda\to\infty$, the rescaled curve solves 
$\kappa+n\dot v/u-m\dot u/v=0$.  Since this equation is invariant 
under rescaling of the graph, its solution must be a line.    
This way $\gamma$ has a tangent line at the
origin. A direct calculation shows that this line has an
angle $\tan^{-1}\sqrt{m/n}$. \\
 It is evident that if $\gamma(0)=(0,0)$ 
  the apparent horizon has a singularity at the origin.  
  By the rescaling argument from above we get that
   the tangent cone of the horizon at the origin is the $SO(n+1)\times 
   SO(m+1)$-invariant minimal cone in $\R^{n+m+2}$. This ends the proof.
\end{proof}

\section{Topology}\label{topo}

Our goal here is to prove the following result.

\begin{thm} \label{main2}
The apparent horizon is smooth everywhere and has topology $S^n\times S^{m+1}$. 
\end{thm}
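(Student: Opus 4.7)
The plan has two parts. First I would rule out the conical singularity alternative left open by Theorem~\ref{regu} via an ODE attractor analysis on the minimal surface equation~\eqref{mse}. Second, granting that $\gamma^*$ avoids the origin, I would identify the topology of $\Sigma(\gamma^*)$ from the manner in which $\gamma^*$ meets $\partial\R^2_+$. I expect the ODE step to be the main obstacle.

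Assume for contradiction that $\gamma^*$ has an endpoint at the origin. By Theorem~\ref{regu} the horizon then has tangent cone $y=\sqrt{m/n}\,x$ at $0$, the $SO(n{+}1)\times SO(m{+}1)$-invariant minimal cone over $S^n\times S^m$ in $\R^{n+m+2}$. Recast \eqref{mse} as an autonomous dynamical system in the logarithmic radial variable $t=\tfrac12\log(x^2+y^2)$, the angular coordinate $\theta=\arctan(y/x)$, and the tangent angle $\phi$ of $\dot\gamma$; by Lemma~\ref{asympt}(iii)--(iv) the contributions of the conformal factor $U$ enter only as controlled corrections as $t\to\pm\infty$, so the limiting system is the scale-invariant Euclidean minimal surface system $\kappa+n\dot y/x-m\dot x/y=0$. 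The cone corresponds to the fixed point $P_0=(\theta_0,\theta_0)$ with $\theta_0=\tan^{-1}\sqrt{m/n}$. The heart of the proof is to linearize the autonomous system at $P_0$ and to check that both eigenvalues of the linearization have negative real part for every $n,m\ge 1$, so that $P_0$ is a hyperbolic sink in forward $t$; its stable manifold is then a full neighborhood of $P_0$ and its unstable manifold is trivial. Consequently the only trajectory asymptotic to $P_0$ as $t\to-\infty$ (i.e., to the origin of $\R^2_+$) is the constant trajectory, which in the $(x,y)$ plane is exactly the unbounded straight cone emanating from the origin.

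Thus if $\gamma^*$ had an endpoint at the origin with cone tangent, $\gamma^*$ would have to coincide with this unbounded straight cone, contradicting the boundedness of the horizon established in Claim~1 of the proof of Theorem~\ref{regu}. Hence $\gamma^*$ does not touch the origin, and by Theorem~\ref{regu} the horizon $\Sigma^*=\Sigma(\gamma^*)$ is smooth everywhere. For the topology, $\gamma^*$ is now a smooth compact minimal curve whose two endpoints lie on $\partial\R^2_+\setminus\{0\}$ and meet the axes perpendicularly (Claim~2 of the proof of Theorem~\ref{regu}). The bounded component of $\R^2_+\setminus\gamma^*$ must contain $N_\epsilon\subset\{y=0\}$; endpoints on the $y$-axis would confine the bounded component to a neighborhood of the $y$-axis, disjoint from $N_\epsilon$, so in fact both endpoints of $\gamma^*$ lie on the $x$-axis on opposite sides of $N_\epsilon$. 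At those endpoints the $S^m$ factor of the generic fiber $S^n\times S^m$ of $\Sigma(\gamma^*)$ collapses to a point while the $S^n$ factor remains, so $\Sigma(\gamma^*)$ is the product of $S^n$ with the unreduced suspension of $S^m$, namely $S^n\times S^{m+1}$.
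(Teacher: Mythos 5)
Your overall strategy tracks the paper's: linearize the minimal-surface ODE in scale-invariant coordinates near the cone angle, argue the cone is an attractor, conclude that a curve through the origin cannot be compact, then read off the topology from the endpoint structure of $\gamma^*$. The eigenvalue computation you allude to is in fact correct: in the paper's coordinates the Jacobian of the limiting Euclidean system at the fixed point has trace $-1-(n+m)<0$ and determinant $2(n+m)>0$, so both eigenvalues have negative real part for all $n,m\ge1$. But there are two genuine gaps.

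First, the step ``the only trajectory asymptotic to $P_0$ as $t\to-\infty$ is the constant trajectory'' is correct only for the \emph{autonomous} limiting system. The actual system \eqref{phase} is non-autonomous ($K_1,K_2$ depend on $t$ through $U$), and the constant phase $P_0$ is not a solution of it. Indeed Proposition~\ref{foli} explicitly produces a (non-constant) solution of \eqref{mse} through the origin with tangent angle $\tan^{-1}\sqrt{m/n}$; its phase does converge to $P_0$ as $t\to-\infty$, directly contradicting your uniqueness claim. What you actually need — and what the paper proves — is that any such trajectory remains trapped in a region of phase space where $Z=dy/dx$ is bounded, so that the corresponding curve is an entire graph and hence unbounded. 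The local hyperbolic-sink fact gives no control once the trajectory leaves a small neighborhood of $P_0$; a quantitative argument (here, the trapping region $\Omega$ of Proposition~\ref{linear}, built using the uniform bounds on $K_1,K_2$) is essential. Moreover, the paper rules out endpoints on a full segment $\{0\}\times[0,\delta]\cup[0,\delta]\times\{0\}$, not just the origin; this stronger statement is used in combination with the maximum principle, as explained next.

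Second, your topology step asserts that a $y$-axis endpoint would ``confine the bounded component to a neighborhood of the $y$-axis, disjoint from $N_\epsilon$.'' That rules out the case of both endpoints on the $y$-axis, but not the case of one endpoint on the $y$-axis and one on the $x$-axis to the right of $N_\epsilon$: such a curve can perfectly well enclose $N_\epsilon$ (note $N_\epsilon$ has height $\sim\epsilon^{1/m}$, which is smaller than the maximum-principle strip $\{y\le c'\epsilon^{1/(n+m)}\}$). The paper closes this gap by combining the maximum principle (Theorem~\ref{ruly}: $\gamma^*\subset\{y\le c'\epsilon^{1/(n+m)}\}$) with Theorem~\ref{unbo} (no endpoints on $\{0\}\times[0,\delta]$), and observing that $\delta=\epsilon^{1/(n+m+1)}>c'\epsilon^{1/(n+m)}$ for small $\epsilon$, so together they exclude all of the $y$-axis. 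You never invoke the maximum principle, so this case remains open in your proposal.
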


Recall that, from before, the horizon 
is given by $\Sigma(\gamma^*)$. Furthermore, the horizon is everywhere smooth, 
unless $\gamma^*$ goes through the origin  --in that case $\Sigma(\gamma^*)$
gets a conical singularity around the origin.\\

{\it Idea of Proof of Theorem \ref{main2}.} 
The main idea of the proof is to prove that the  curve
$\gamma^*$ is (roughly speaking) a small semicircle around the 
point $(1,0)$.  This way,  $\gamma^*$
does not go through the origin, and the horizon 
 does not have a conical singularity.  Furthermore, we get that the topology
of the horizon is automatically $S^n\times S^{m+1}$.\\
In order to show that $\gamma^*$ is ``semicircular'' we 
prove that there are three regions in the plane
that $\gamma^*$ (including its endpoints) must avoid altogether.  Please see Figure 2
for a depiction of them.\\
{\it Region} (1). The marginally trapped region $N_\epsilon$
which is a small enough semicircle around $(1,0)$.  This region is 
already forbidden since
the end is positive. It 
exists by the argument in the proof of Theorem \ref{theyare}.\\
{\it Region} (2). The region $\{y> c'\epsilon ^{1/(n+m)}\}\subset \R^2_+$.
This region is forbidden by the maximum principle which we prove in Subsection \ref{maxx} below. 
(Here $c'>0$ will be a constant independent of $\epsilon$.)\\
{\it Region} (3). A small portion of the axes around the origin. We 
rule out this region in Subsection \ref{last} below. 
This portion is forbidden because there is a conical attractor that 
makes ``almost conical curves"  become complete graphs,
which are unbounded.\\
It follows that once regions (1)--(3) are forbidden, all that $\gamma^*$ can
be is a deformation of a semicircle around $(1,0)$.  This is because $\gamma^*$
is simple, bounded, and  intersects the $x$-axis 
on each side of $(1,0)$.

\begin{figure}
\begin{center}
 \includegraphics[height=1.5in]{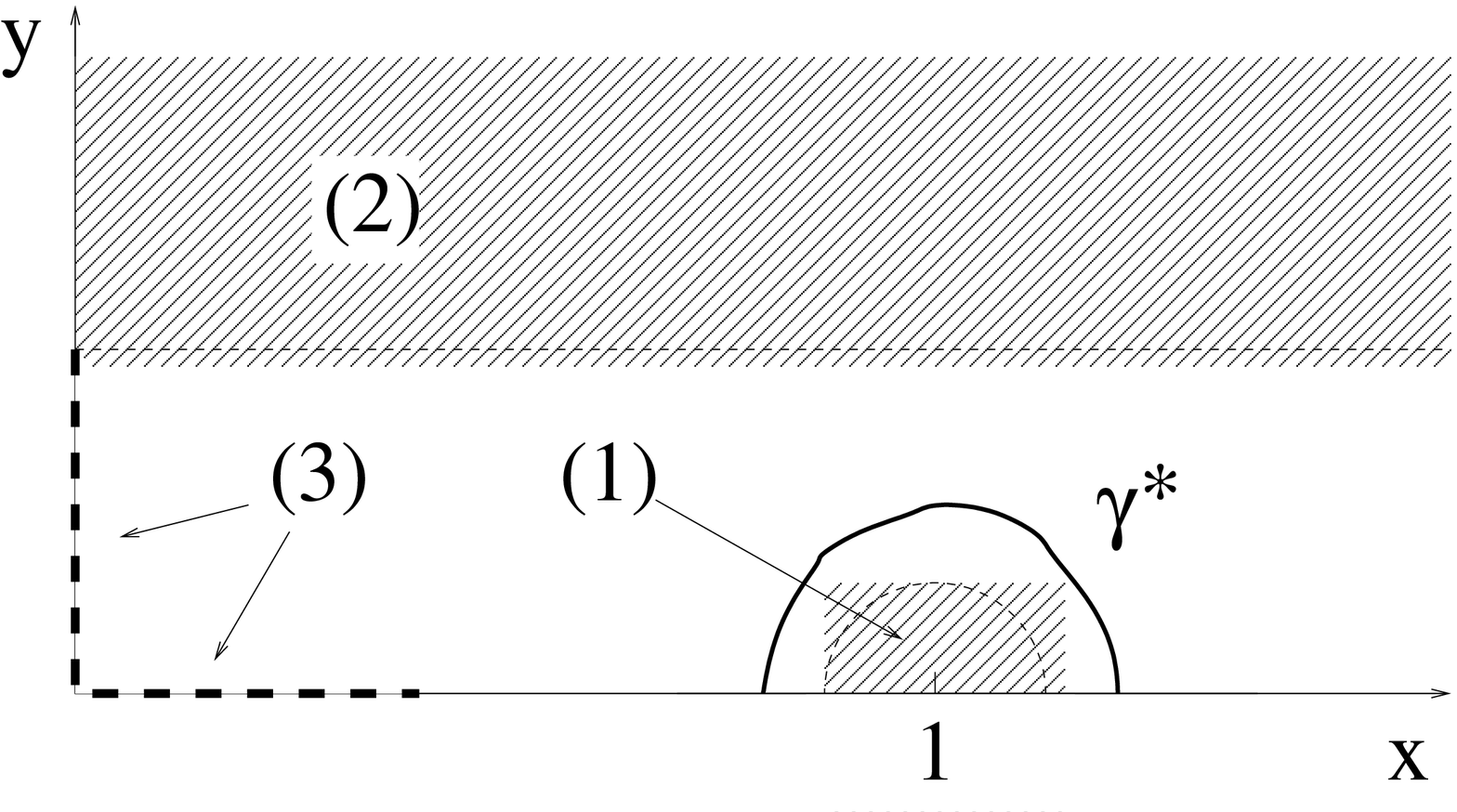} 
\end{center}
Figure 2. The horizon curve $\gamma^*$ and the forbidden regions.

\end{figure}

\subsection{Maximum principle.}\label{maxx}

We begin by finding an estimate for the  derivatives
of $U$ which will be used in the maximum principle.

\begin{lemma}\label{estimate} 
Let $r^2=(x-1)^2+y^2$ as before.  Then
\begin{equation}\label{bound}
 |U_x|  \le \epsilon ~c'r^{-(m+n+1)} ~\mbox{ and }~ 0\le -U_y \le \epsilon ~c'{r^{-(n+m+1)}},
\end{equation}
where $c'>0$ is a constant independent of $\epsilon$.
\end{lemma}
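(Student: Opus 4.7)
The plan is to differentiate the defining formula \eqref{defu} under the integral sign, and then bound the resulting integrands by the elementary pointwise comparison $|p-q|\ge r$.

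First I would invoke the $SO(n+1)\times SO(m+1)$-invariance of $U$: it implies that $U$ is determined by the norms of the $\R^{n+1}$- and $\R^{m+1}$-components of $p$, which are exactly the reduced coordinates $x,y$. Evaluating at the aligned representative in which these components are $(x,0,\ldots,0)$ and $(y,0,\ldots,0)$, and parametrizing $q=(\xi_1,\ldots,\xi_{n+1},0,\ldots,0)\in S^n$ with $|\xi|^2=1$, I would record the identity
\[ |p-q|^2 = (x-\xi_1)^2 + (1-\xi_1^2) + y^2 = r^2 + 2x(1-\xi_1) \ge r^2, \]
where the last inequality uses $x\ge 0$ and $\xi_1\le 1$. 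Differentiating \eqref{defu} under the integral sign then produces the explicit formulas
\[ U_x=-\epsilon(n+m)\int_{S^n}\frac{x-\xi_1}{|p-q|^{n+m+2}}\,d\mu(q),\qquad U_y=-\epsilon(n+m)\,y\int_{S^n}\frac{d\mu(q)}{|p-q|^{n+m+2}}. \]

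Next I would estimate each formula separately, using only the inequality $|p-q|\ge r$. For $U_y$, the integrand is bounded by $r^{-(n+m+2)}$, and $y\le r$ is immediate from $r^2\ge y^2$; combining these yields $|U_y|\le\epsilon(n+m)|S^n|\,r^{-(n+m+1)}$, and the sign $-U_y\ge 0$ is automatic from $y\ge 0$ together with positivity of the integrand. For $U_x$, the Cartesian decomposition $|p-q|^2=(x-\xi_1)^2+\xi_2^2+\cdots+\xi_{n+1}^2+y^2$ gives $|x-\xi_1|\le|p-q|$, so the integrand is at most $|p-q|^{-(n+m+1)}\le r^{-(n+m+1)}$ and hence $|U_x|\le\epsilon(n+m)|S^n|\,r^{-(n+m+1)}$. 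Setting $c':=(n+m)|S^n|$ completes both bounds.

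I do not anticipate a real obstacle here. The crude estimate $|p-q|\ge r$ deliberately discards the $\xi_1$-dependent term $2x(1-\xi_1)$, so the bound is far from sharp near $S^n$ (where Lemma \ref{asympt}(iii) gives the stronger $\epsilon r^{-m}$-type asymptotics); however, the advertised decay $r^{-(n+m+1)}$ is precisely the scale the maximum-principle argument of \S\ref{maxx} is set up to use, so there is no incentive to work harder on the integral.
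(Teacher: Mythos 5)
Your proof is correct, and it runs on the same template as the paper's: write $U$ at the aligned representative $p=(x,0,\ldots,0;y,0,\ldots,0)$ as
$U(x,y)=1+\epsilon\int_{S^n}\bigl(y^2+(x-\xi_1)^2+\xi_2^2+\cdots+\xi_{n+1}^2\bigr)^{-(n+m)/2}d\mu(\xi)$,
differentiate under the integral sign, and bound pointwise using $|p-q|\ge r$. For $U_y$ your computation is identical to the paper's: $0\le -U_y\le \epsilon(n+m)y|S^n|\,r^{-(n+m+2)}$, and $y\le r$ closes, with the sign free from $y\ge 0$.

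For $U_x$ you deviate, and the deviation is a genuine improvement. The paper's stated intermediate estimate is $|U_x|\le \epsilon\, c'(x+1)\,r^{-(m+n+2)}$, obtained from the crude comparison $|x-\xi_1|\le x+1$. As literally written, passing from this to the lemma's $\epsilon\,c'\,r^{-(m+n+1)}$ would require $x+1\le c''\,r$, which fails near $S^n$ (where $x\to 1$ while $r\to 0$), so an extra restriction or argument is needed to close the paper's version. Your observation that the Cartesian decomposition $|p-q|^2=(x-\xi_1)^2+\xi_2^2+\cdots+\xi_{n+1}^2+y^2$ already gives $|x-\xi_1|\le|p-q|$ is exactly the right repair: it yields $|x-\xi_1|\,|p-q|^{-(n+m+2)}\le|p-q|^{-(n+m+1)}\le r^{-(n+m+1)}$ unconditionally, producing the claimed bound with the uniform constant $c'=(n+m)|S^n|$ and no hidden restriction on the region.
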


\begin{proof} From the expansion $U(x,y)=1+\epsilon\int_{S^n} (y^2+(x-\xi_1)^2+\xi_2^2
+\cdots +\xi_{n+1}^2)^{-(n+m)/2}d\mu(\xi)$ it follows that
$|U_x|  \le \epsilon ~c'(x+1)r^{-(m+n+2)}$ and that $0\le -U_y\le \epsilon ~c'y r^{-(n+m+2)}$,
which gives the result.
\end{proof}
 
  
 \begin{thm}[Maximum principle]\label{ruly}
$\gamma^*$ is contained in $\{y\le c'\epsilon ^{1/(n+m)}\}$,
where $c'>0$ is a constant independent of $\epsilon$.
\end{thm}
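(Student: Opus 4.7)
The plan is a one-point maximum-principle argument applied where the $y$-coordinate of $\gamma^*$ is largest. First, I would let $(x_0,y_0)\in\gamma^*$ realize $y_0=\max_{\gamma^*}y$; if $y_0=0$ there is nothing to prove, so assume $y_0>0$. By Theorem \ref{regu}, $\gamma^*$ is smooth in a neighborhood of $(x_0,y_0)$ (the only possible singular point, the origin, is excluded since $y_0>0$), and Lemma \ref{graphi} lets me write $\gamma^*$ locally as the graph of a function $y=y(x)$ with $y'(x_0)=0$ and $y''(x_0)\le 0$. The same conclusion holds one-sidedly if the maximum happens to be attained at an endpoint on the $y$-axis, since by Proposition \ref{foli} the curve meets that axis perpendicularly.

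Next, I would feed $y'(x_0)=0$ into the minimal-graph equation \eqref{mingraph}, which collapses to
\[
y''(x_0)\;=\;\frac{m}{y_0}\;+\;c\,\frac{U_y(x_0,y_0)}{U(x_0,y_0)}.
\]
Combining with $y''(x_0)\le 0$, the trivial bound $U\ge 1$, and the estimate $0\le -U_y\le c'\epsilon\,r^{-(n+m+1)}$ from Lemma \ref{estimate}, I obtain
\[
\frac{m}{y_0}\;\le\;-c\,\frac{U_y}{U}\;\le\; cc'\epsilon\, r^{-(n+m+1)},
\]
where $r^2=(x_0-1)^2+y_0^2\ge y_0^2$. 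Substituting the geometric bound $r\ge y_0$ produces $y_0^{n+m}\le\tfrac{cc'}{m}\epsilon$, i.e.\ $y_0\le c''\epsilon^{1/(n+m)}$ with $c''$ independent of $\epsilon$.

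I do not anticipate a genuine obstacle: the whole argument is a single application of the second-derivative test to the minimal-graph equation, combined with the pointwise estimate on $U_y$ just proved in Lemma \ref{estimate}. The only item that requires verification is the legitimacy of the graph/second-derivative step at the maximizing point, and this is immediate from Theorem \ref{regu} (smoothness away from the origin) together with Proposition \ref{foli} (perpendicularity at axis endpoints). The $\pm$ sign in \eqref{forgraph} plays no role here, since the minimal condition $h=0$ is invariant under reversing orientation.
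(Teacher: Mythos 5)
Your argument is correct and is essentially the paper's: both locate the point where $y$ is maximal, use smoothness of the horizon away from the origin to justify a second-derivative (or, in the paper's arc-length language, $\kappa\ge 0$) test, plug $\dot y=0$ into the minimal equation, and close with the gradient bound on $U_y$ from Lemma \ref{estimate} plus $U\ge 1$ and $r\ge y_0$. The one place you are slightly glib is the possibility that the max sits at a $y$-axis endpoint: the paper excludes this outright by citing the concavity from Lemma \ref{taylor1}, whereas you keep the case and assert that ``the same conclusion holds one-sidedly.'' At $x_0=0$ your displayed identity $y''(x_0)=m/y_0+cU_y/U$ is not literally what \eqref{mingraph} gives, since the term $y'\,n/x$ does \emph{not} vanish in the limit: by L'H\^opital it contributes $n\,y''(0)$, so the correct relation is $(n+1)\,y''(0)=m/y_0+cU_y/U$ (cf.\ Lemma \ref{taylor1}). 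The prefactor is harmless because only the sign of $y''(0)$ is used, so the inequality $m/y_0\le -cU_y/U$ and the resulting bound on $y_0$ are unchanged; still, you should state the corrected boundary identity rather than reuse the interior one verbatim.
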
 
 
 \begin{proof} We will show that the maximum height of $\gamma^*$
 is bounded by $c'\epsilon ^{1/(n+m)}$.  Indeed, let $\gamma^*=(x(s),y(s))$.  
 From the proof of Theorem \ref{regu} it follows that $y(s)$ has a global
 maximum at some $s_0>0$ for which $\dot y(s_0)=0$.
We first claim that this global maximum is not located on the $y$-axis.
Indeed, this follows from the estimates of Lemma \ref{taylor1} below, which show that
the curve is concave up around the $y$-axis.  If the curve is oriented
so that $s=0$ at the furthest endpoint on the $x$-axis,  then
at $s_0$ we have $k(s_0)\ge 0$ and $\dot x(s_0)=-1$.  
 Evaluating equation \eqref{mse} at $s_0$  gives $\kappa(s_0) 
 +m/y(s_0)+cU_y/U=0$.  Using the estimate of Lemma \ref{estimate}
 together with $U\ge 1$ and $\kappa\ge 0$ we get 
 $m/y(s_0)\le \epsilon cc'r^{-(n+m+1)}$.  Since $y\le r$, it follows that
 $y(s_0)^{n+m}\le cc' \epsilon/m$. This ends the proof.
  \end{proof}
 
\subsection{The conical attractor.}\label{last}  Here we set
$\delta=\epsilon^{1/(n+m+1)}$ to make some expressions simpler.
 We are interested in proving the
following result.

\begin{thm}\label{unbo} For $\delta>0$  small enough,
any  curve that solves equation \eqref{mse} and has an endpoint on 
$\{0\}\times [0,\delta]$ or on 
 $[0,\delta]\times \{0\}$
 is a complete graph, hence unbounded. 
\end{thm}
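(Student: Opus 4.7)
The plan is to exhibit the minimal cone $\{y = \sqrt{m/n}\,x\}$ as an exponentially stable attractor for the Euclidean $SO(n+1)\times SO(m+1)$-invariant minimal surface equation, and then persist this picture under the $O(\epsilon)$ perturbation introduced by the conformal factor $U$. Since $\delta = \epsilon^{1/(n+m+1)}$, small $\delta$ is equivalent to small $\epsilon$. By the $(x,n) \leftrightarrow (y,m)$ symmetry of \eqref{mse} it suffices to treat an endpoint on $\{0\}\times[0,\delta]$; write $\gamma(0) = (0, y_0)$ with $y_0 \in [0,\delta]$. Proposition \ref{foli} gives that $\gamma$ emanates perpendicular to the $y$-axis when $y_0 > 0$ (or at angle $\tan^{-1}\sqrt{m/n}$ when $y_0 = 0$), so locally $\gamma$ is a graph $y = y(x)$ satisfying \eqref{mingraph}. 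Because $(0,y_0)$ lies at distance $\sim 1$ from the singular set $S^n$ at $(1,0)$, $U$ is smooth near the starting point with $U_x, U_y = O(\epsilon)$, making \eqref{mingraph} an $O(\epsilon)$ perturbation of the Euclidean equation $y'' = (1+y'^2)(m/y - ny'/x)$.

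To expose the attractor I would pass to phase-plane coordinates $w = y/x$, $\tau = \log x$, in which the Euclidean equation becomes the autonomous second-order ODE
\[
w' + w'' = \bigl(1+(w+w')^2\bigr)\bigl(m/w - n(w+w')\bigr),
\]
with equilibrium $w_c = \sqrt{m/n}$. Linearizing at $w_c$ and using $m/w_c^2 = n$, $1+w_c^2 = (n+m)/n$ yields the characteristic polynomial
\[
\lambda^2 + (n+m+1)\lambda + 2(n+m) = 0,
\]
whose roots have strictly negative real part for all $m,n \ge 1$: a stable node when $n+m \ge 6$ and a stable spiral otherwise. The initial condition $y(0) = y_0,~ y'(0) = 0$ corresponds to $w \to +\infty$ as $\tau \to -\infty$; by monotonicity arguments in the phase plane (using $y'(0) = 0$ and $y''(0) = m/((n+1)y_0) > 0$, so $y'$ increases and $w$ decreases) I would show that the Euclidean trajectory enters the basin of $w_c$ without returning to $\{w = 0\}$ or acquiring a vertical tangent, hence converges to the cone and extends to $\tau = +\infty$, i.e.\ to $x = +\infty$.

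For the full equation \eqref{mse} the $U$-terms inject a nonautonomous $O(\epsilon)$ perturbation into this autonomous system, and the perturbation actually decays as the trajectory moves out since $U \to 1$ at spatial infinity by Lemma \ref{asympt}(iv). The conclusion combines three ingredients: (a) for $\delta$ small the perturbed trajectory shadows the Euclidean one on a bounded $\tau$-interval and enters a small neighborhood of $w_c$; (b) within that neighborhood the exponential contraction from the linearization dominates the $O(\epsilon)$ drift; (c) as the curve drifts out along the cone direction, $U \to 1$, so the perturbation shrinks and the autonomous regime is restored. The main obstacle will be making (b) rigorous over the unbounded $\tau$-interval, where naive Gronwall estimates blow up; I would handle this by constructing an explicit Lyapunov function for the linearization and combining it with the decay of the perturbation to close a global bound. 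The $y_0 = 0$ case is identical, with the initial condition already sitting on the cone line.
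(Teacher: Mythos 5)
Your phase-plane setup is essentially the paper's: you pass to $t=\log x$, $W=y/x$ (and, implicitly, $Z=dy/dx$), identify the minimal cone slope $\sqrt{m/n}$ as the relevant equilibrium, and compute the linearization. Your characteristic polynomial $\lambda^2+(n+m+1)\lambda+2(n+m)=0$ is correct, and the node-versus-spiral dichotomy at $n+m=6$ matches Ilmanen's picture, which the paper cites. So the overview is right. But there is a genuine gap at exactly the place you flag as ``the main obstacle.'' You need to show that the trajectory stays graphical over an \emph{infinite} $\tau$-interval under a nonautonomous perturbation; linearizing at the fixed point only controls a neighborhood of the equilibrium, and the advertised Lyapunov function that would glue the transient region to the contracting region is never constructed. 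The bootstrap in step (c) (``the perturbation decays because $U\to1$ at infinity'') presupposes that the curve actually reaches infinity, which is what you are trying to prove. Also, the claimed $(x,n)\leftrightarrow(y,m)$ symmetry of \eqref{mse} is false: $U$ is built from the sphere $S^n$ at $(1,0)$ and is not symmetric under $(x,y)\mapsto(y,x)$. Near the origin the two cases are merely \emph{analogous}, not related by a symmetry.

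The paper sidesteps all of this with a softer argument that you should try to adopt. Instead of proving convergence to the cone, it constructs an explicit forward-invariant trapping region $\Omega\subset\{W\ge\delta\}$ in the $(W,Z)$-plane (Proposition \ref{linear}), bounded in the $Z$-direction, on whose boundary the vector field of \eqref{phase} always points inward. Crucially, $\Omega$ is invariant under \emph{uniform} $O(\epsilon^\alpha)$ bounds on $K_1,K_2$ valid for all $t$ whenever $W\ge\delta$ (Lemma preceding Corollary \ref{ll}); no decay of the perturbation, no Gronwall, no Lyapunov function, and no contraction estimate are required. Separately, Lemma \ref{taylor1} and Corollary \ref{taylor2} give a Taylor expansion of the solution near the axis that forces the phase into $\Omega$ (Proposition \ref{tom}). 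Once the phase is trapped, $Z=dy/dx$ is bounded and $W\ge\delta$ keeps the curve off the singular axis, so the graph extends for all $x$ and is unbounded. In short: you only need boundedness of the slope, not convergence to the cone, and an invariant region delivers that without the quantitative contraction you would otherwise have to build.
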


 We will give a proof of this theorem for curves that have one endpoint
on $\{0\}\times [0,\delta]$.  The other case is
analogous.

\begin{remark} The depiction of Figure 2 is accurate.  Indeed,
$(\{0\}\times [0,\delta])\cup([0,\delta]\times \{0\})$ 
intersects $\{y>c'\epsilon^{1/(n+m)}\}$ of the maximum principle 
since $\delta=\epsilon^{1/(n+m+1)}$ and  
$ \epsilon^{1/(n+m+1)}>c'\epsilon^{1/(n+m)}$ for $\epsilon>0$ small enough.
\end{remark}

In order to proceed with the proof, we introduce a change of coordinates in equation
\eqref{mingraph} to study its behavior.  Similar coordinates
were used by Ilmanen in \cite{ilmanen1998} to study 
wiggly companions of minimal cones, which satisfy equation \eqref{mingraph}
with $U\equiv 1$. In that case, the change of coordinates makes 
the equation an autonomous two-dimensional system,
 for which stability analysis is simple. 
In our case, our equations are not autonomous so the analysis is 
more involved.\\

Consider the coordinates  
$
t=\log x, W=y/x,~Z=dy/dx.
$
Using the chain rule, equation \eqref{mingraph} 
becomes the first order system 
\begin{equation}\label{phase}
W_t(t,W,Z)=Z-W,~~Z_t(t,W,Z)=n\frac{1+Z^2}{W}(K_1
-K_2WZ),
\end{equation}
where $K_1(t,W)=m/n+cy U_y/(nU)$ and $K_2(t,W)=1+c yU_x/U$.\\

 Let $\gamma$ be a curve that solves equation \eqref{mse} and has an endpoint
on $\{0\}\times [0,\delta]$.  Lemma \ref{graphi} gives that $\gamma$ is the graph
of a function $y(x)$ that solves equation \eqref{mingraph} near the axis. Therefore,
we may use the $(W,Z)$ coordinates and equation \eqref{phase} to obtain the ``phase''
of $\gamma$, i.e. the tuple $(W,Z)$ that represents $\gamma$ 
inside of $WZ$-space (at least for a short time) which is given by the above
change of coordinates.\\

{\it Idea of the proof of Theorem \ref{unbo}.}  Equation \eqref{phase} is ``almost'' autonomous, 
which is the  motivation  for examining  the phase plane
in the  $(W,Z)$ coordinates.  (In the Euclidean case, the analogous equation
becomes autonomous.  See  Figure 3(a), and \cite{ilmanen1998}.)
 We show that there exists a trapping region $\Omega$ in the plane, for which 
the evaluation of the
vector field $(W_t,Z_t)$ on $\partial \Omega$ always points towards
its interior (See Figure 3(b)). Furthermore, we prove that $\Omega$ 
is bounded in the $Z(=dy/dx)$-direction.  
This gives that any curve whose phase enters $\Omega$ 
does not blow up in finite time,
therefore it is a complete  graph and is unbounded.  
In the last part of the proof we find global estimates that  
show that the phase of curves that solve equation \eqref{mse} and have an endpoint on 
$\{0\}\times [0,\delta]$ eventually lies in $\Omega$.  We can expect this behavior
since, in the Euclidean case,  a minimal hypersurface that is 
close to a minimal cone at the origin converges to the cone at infinity.
In particular, it remains graphical and is unbounded.\\

\begin{figure}
\begin{center}
 \includegraphics[height=1.5in]{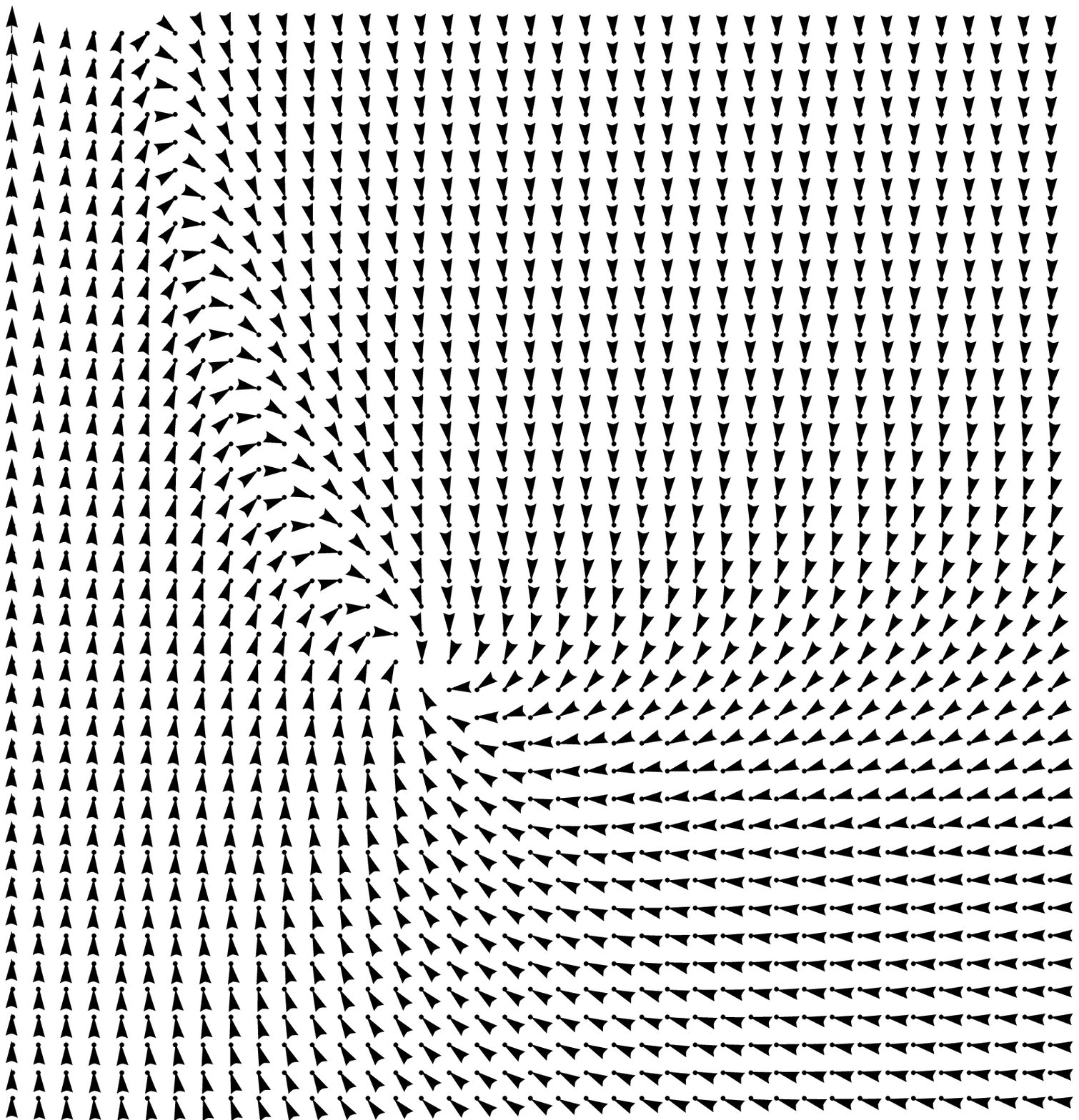} 
\hspace{1cm}
\includegraphics[height=1.5in]{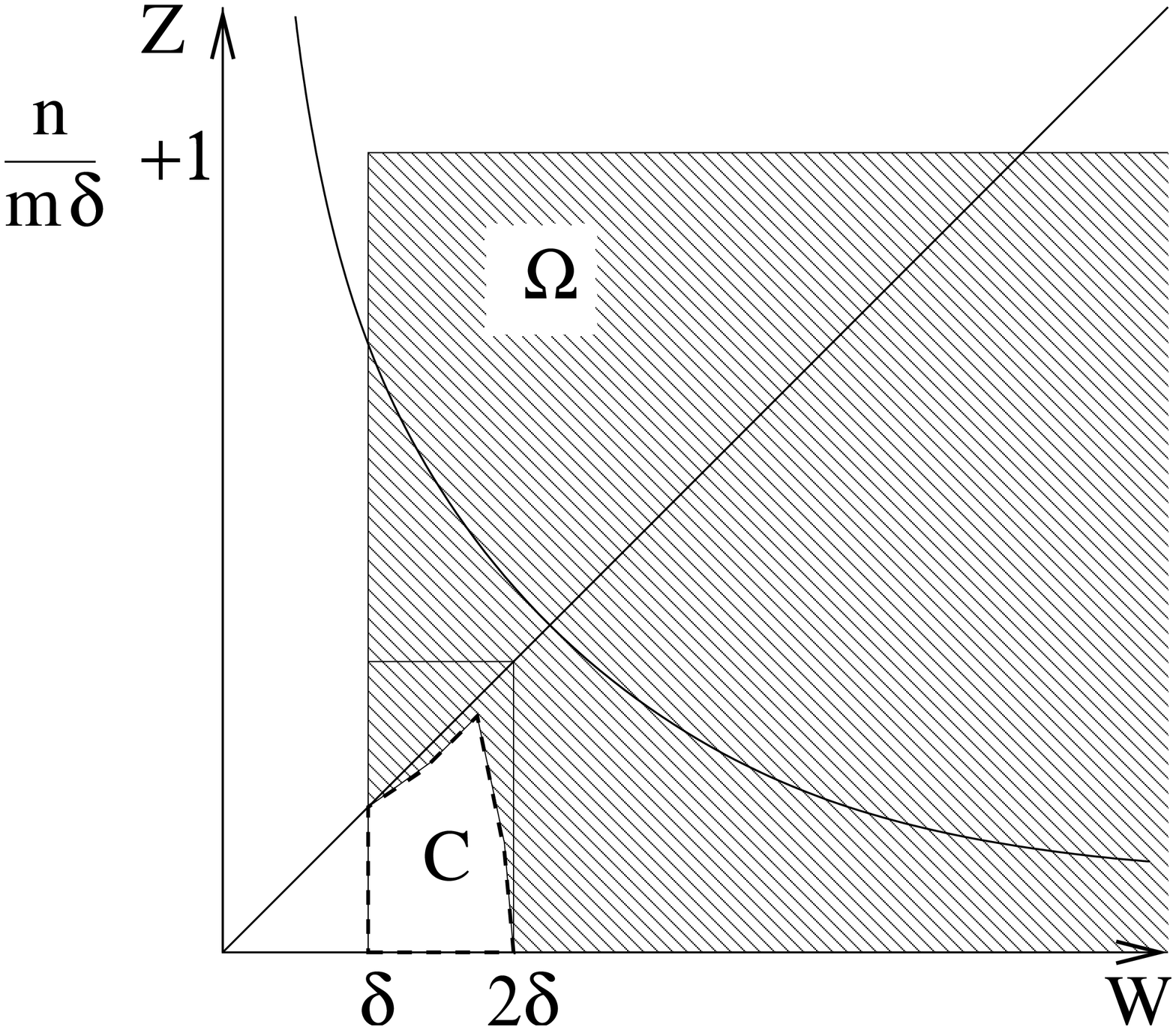}\\
 (a) \hspace{2in} (b)
\end{center}
Figure 3. (a) Direction field $(Z_t,W_t)$ for $U=1$. (b) The trapping region $\Omega$.

\end{figure}

We now introduce the conical attractor, which actually is just an
attracting region. (In the Euclidean case, curves in phase space do
spiral into the attracting fixed point given by the minimal cone.)

 \begin{prop}\label{linear}  Let
$\Omega'=  \{ (\delta,0)\le (W,Z)\le (+\infty, n/(m\delta)+1) \}$ and
$C$ as below.  Then $\Omega=\Omega'-C$
is a trapping region, i.e. the quantity $(W_t,Z_t)$ of equation \eqref{phase} 
evaluated on $\partial \Omega$ always points
towards the inside of $\Omega$.
\end{prop}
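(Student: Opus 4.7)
My plan is to verify the trapping condition edge-by-edge on $\partial\Omega$, showing that the vector field $(W_t,Z_t)$ from \eqref{phase} has nonpositive outward normal component everywhere. The essential tool is Lemma \ref{estimate}, which bounds the perturbations
\[
K_1 - \tfrac{m}{n} = \tfrac{c y U_y}{nU}, \qquad K_2 - 1 = \tfrac{c y U_x}{U},
\]
by $O(\epsilon\, y\, r^{-(n+m+1)})$ uniformly on the region of interest. The calibration $\delta = \epsilon^{1/(n+m+1)}$ is chosen precisely so that these corrections remain small enough to be absorbed into the strict inequalities coming from the autonomous Euclidean model $U \equiv 1$ studied in \cite{ilmanen1998}.

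The boundary of the enveloping rectangle $\Omega' = [\delta,\infty)\times[0,Z_0]$, with $Z_0 = n/(m\delta)+1$, has three edges to inspect. On the bottom edge $\{Z=0\}$ the field reduces to $(-W,\,(n/W)K_1)$; since $K_1 \ge m/(2n)>0$ for small $\epsilon$, the flow points into $\Omega$. On the top edge $\{Z=Z_0\}$, inward-pointing is equivalent to $K_2 W Z_0 \ge K_1$, which in the autonomous limit requires $W \ge (m^2/n^2)\delta$ approximately; this holds throughout the top edge when $m\le n$, but fails on a short left portion when $m>n$. On the left edge $\{W=\delta\}$, $W_t = Z-\delta$ is nonnegative only for $Z\ge\delta$. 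The set $C$ defined below the statement is constructed precisely to excise the locus where one of these three inequalities fails: a wedge anchored near $(\delta,0)$ when $m\le n$, extending upward toward roughly $((m/n)^2\delta,\,Z_0)$ when $m>n$.

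For the new boundary segment $\partial C\cap\overline{\Omega'}$ I would check directly that the flow crosses it inward. Inside the excised region, $W_t = Z-W < 0$ (pushing horizontally away from $\Omega$) while $Z_t$ is positive and bounded by the autonomous estimate. With $\partial C$ chosen transverse to $(W_t,Z_t)$ and oriented so that the outward normal of $\Omega$ along $\partial C$ lies roughly in the third quadrant, the inequality $(W_t,Z_t)\cdot n_{\mathrm{out}} \le 0$ follows from the signs. The explicit curve $\partial C$ in the text will be selected so that the autonomous inequality is strict with a margin wide enough to absorb the $O(\epsilon)$ perturbation.

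The main obstacle is the non-autonomous character of \eqref{phase}: $K_1$ and $K_2$ depend on $(t,W)$ through $U(e^t, e^t W)$ and its derivatives, so the trapping inequalities must hold uniformly in $t$ along every trajectory inside $\Omega$. This is exactly what Lemma \ref{estimate} provides in conjunction with the $\delta$-calibration, reducing the verification to the autonomous model up to controlled errors; the delicate point is that $W$ can become arbitrarily large in $\Omega$, so any pointwise bound on the perturbation must decay fast enough in $y = e^t W$ to keep the errors uniformly negligible.
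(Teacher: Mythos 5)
Your edge-by-edge strategy for $\partial\Omega'$ (bottom edge: $Z_t>0$; left edge: $W_t\ge 0$ above the diagonal; top edge: $Z_t<0$) is the same as the paper's, and your use of the bounds on $K_1,K_2$ via Lemma~\ref{estimate} together with the calibration $\delta=\epsilon^{1/(n+m+1)}$ matches the lemma the paper proves for exactly this purpose. The worry you flag at the end --- that the perturbation of $K_1,K_2$ must be controlled uniformly as $W\to\infty$ --- is in fact resolved by that lemma, which is stated uniformly on $\{W\ge\delta,\ t\in\R\}$.

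The genuine gap is in how you handle $\partial C$. You propose to choose $\partial C$ ``transverse to $(W_t,Z_t)$'' and then argue that $(W_t,Z_t)\cdot n_{\mathrm{out}}\le 0$ ``follows from the signs'' because $n_{\mathrm{out}}$ lies roughly in the third quadrant while the flow has the sign pattern $(-,+)$. This does not follow: if $n_{\mathrm{out}}=(-a,-b)$ with $a,b>0$ and $(W_t,Z_t)=(-c,d)$ with $c,d>0$, then $(W_t,Z_t)\cdot n_{\mathrm{out}}=ac-bd$, whose sign is undetermined. Worse, it fails definitively at the foot of $\partial C$: near where $\partial C$ meets $\{Z=0\}$ the outward normal of $\Omega$ is close to $(-1,0)$ and the flow there is $W_t=Z-W<0$, $Z_t>0$, so $(W_t,Z_t)\cdot n_{\mathrm{out}}\approx -W_t>0$, i.e.\ the flow points \emph{out} of $\Omega$. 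This is precisely the difficulty the paper's construction is designed to sidestep: $C$ is not an arbitrary transversal cut but is built as a worst-case intersection of regions bounded by actual integral curves of \eqref{phase} (starting from $p=(2\delta,0)$ and $q=(\delta,\delta)$, using Corollary~\ref{ll} and the comparison line $L$ to show the construction is nondegenerate). Along an integral curve the flow is tangent, so trajectories cannot cross $\partial C$ from $\Omega$ into $C$; no pointwise sign inequality is needed or available. Your blind construction of $C$ --- an explicit wedge with a sign argument --- would have to be replaced by this dynamical-barrier construction, which is the real content of the proposition. (Your observation that the top edge condition $W\gtrsim (m/n)^2\delta$ can fail near $W=\delta$ when $m>n$ is a legitimate reading of the nullcline, but note that the paper's $C$ lives inside $R=[\delta,2\delta]\times[0,2\delta]$ and so would not reach up to $Z_0$ to repair it; your proposed fix does not correspond to the paper's $C$.)
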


The region $C$ is a region 
bounded by two special curves and two lines.  Roughly
speaking, $C$ is a ``worst case scenario region'' for
the vector field, as we see below.    See Figure 3(b) for a 
depiction of all of them.

We begin by bounding the terms $K_1,K_2$ from above.  We will later use
these bounds to construct the worst case scenario  region $C$.

\begin{lemma} Let $K_1,K_2$ be as in the system \eqref{phase}. On the region $\{(t,W)~:~W\ge \delta,~t\in \R\}$ we have
the uniform bounds
\begin{align*}\label{bound2}
 \frac{m}{n}-c'\epsilon^\alpha& \le K_1(t,W)\le \frac{m}{n},~\mbox{ and }~1
 -c'\epsilon^\alpha\le 
K_2(t,W)\le 1+c'\epsilon^\alpha,
\end{align*}
where $\alpha=1-(m+n)/(m+n+1)>0$.
\end{lemma}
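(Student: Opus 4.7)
The plan is to reduce both bounds to a single geometric estimate, namely that the profile distance $r=\sqrt{(x-1)^2+y^2}$ to the missing $S^n$ is uniformly bounded below by a constant multiple of $\delta$ whenever $W\ge\delta$. Once that is in hand, Lemma \ref{estimate} gives $|U_x|,|-U_y|\le \epsilon c'r^{-(n+m+1)}$, and since $r\ge y$ by definition of $r$, the relevant products become
\[
\frac{y\,|U_x|}{U}\le \epsilon c'\frac{y}{r^{n+m+1}}\le \frac{\epsilon c'}{r^{n+m}},\qquad
\frac{y\,|U_y|}{U}\le \frac{\epsilon c'}{r^{n+m}},
\]
because $U\ge 1$. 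Plugging $r\ge c\delta = c\epsilon^{1/(n+m+1)}$ converts $\epsilon/r^{n+m}$ into $c''\epsilon^{1-(n+m)/(n+m+1)}=c''\epsilon^{\alpha}$, which is exactly the claimed error size.

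For the geometric estimate, I would split into two cases. If $y\ge \delta$, then $r\ge y\ge \delta$ immediately. If instead $y<\delta$, then $W\ge \delta$ forces $x\le y/\delta<1$, so $(x-1)^2=(1-x)^2\ge (1-y/\delta)^2$; combining with $y^2$ gives
\[
r^2 = (x-1)^2 + y^2 \ge \bigl(1-y/\delta\bigr)^2 + y^2,
\]
and minimizing the right-hand side over $y\in[0,\delta]$ yields a lower bound proportional to $\delta^2$ (the minimum is achieved in the interior and is easily seen to be at least $\delta^2/(1+\delta^2)\ge \delta^2/2$ for $\delta$ small). Hence $r\ge c\delta$ uniformly on $\{W\ge\delta\}$.

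With this lower bound, the two required inequalities are read off the definitions. Since $U_y\le 0$ (part of Lemma \ref{estimate}), we get $cyU_y/(nU)\le 0$, giving the clean upper bound $K_1\le m/n$; the corresponding lower bound and both two-sided bounds for $K_2=1+cyU_x/U$ follow from the $\epsilon^\alpha$ estimates above, absorbing the constants into $c'$. The one point that requires a little care, and which I regard as the main (though still elementary) obstacle, is the case $y<\delta$ of the geometric estimate: one must resist the temptation to use $r\ge y$ directly and instead exploit that $W\ge\delta$ forces $x$ away from $1$ whenever $y$ is small. After that observation the proof is a packaging of Lemma \ref{estimate} with the trivial inequality $y\le r$ and the definition $\delta=\epsilon^{1/(n+m+1)}$.
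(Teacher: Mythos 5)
Your proof is correct and follows the same route as the paper's: reduce to the observation that the constraint $W\ge\delta$ keeps the point $(x,y)$ a distance at least $c\delta$ from $(1,0)$, then feed the resulting lower bound for $r$ into the gradient estimate of Lemma \ref{estimate} together with $y\le r$, $U\ge1$, and $\delta=\epsilon^{1/(n+m+1)}$. The paper's one-line claim that $W\ge\delta$ forces $r\ge\delta$ is literally correct only up to the factor $1/\sqrt{1+\delta^2}$ (which it implicitly absorbs into $c'$), and your case analysis ($y\ge\delta$ versus $y<\delta$) supplies the elementary verification that the paper leaves to the reader.
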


\begin{proof}
$W\ge \delta$ is just $y\ge\delta x$, so it must be $r\ge\delta$ in that region. The rest
follows from  applying the estimate \eqref{bound}.
\end{proof}

A consequence of this estimates are global  bounds for the vector field $(W_t,Z_t)$
in a particular region of the $WZ$-plane.

\begin{cor}\label{ll} Let $R=[\delta,2\delta]\times [0,2\delta]$ and $(W_t,Z_t)$ from \eqref{phase}. Then
\begin{align*}
\inf_{\R\times R} W_t\ge -2\delta, ~\mbox{ and }~\inf_{\R\times R} Z_t\ge \frac{m-nc'\epsilon^\alpha}{2\delta}-\delta(1+c'\epsilon^\alpha)(2n+8\delta^2), 
\end{align*}
where $\alpha=1-(m+n)/(m+n+1)>0$.
\end{cor}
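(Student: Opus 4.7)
The plan is to verify both bounds by direct substitution into the system \eqref{phase}, using the uniform bounds on $K_1$ and $K_2$ from the preceding lemma. The arguments $(t,W) \in \R \times [\delta, 2\delta]$ satisfy $W \ge \delta$, so those bounds apply on all of $\R \times R$, and no $t$-dependence survives.

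For the first inequality, I would simply observe that $W_t = Z - W$ is independent of $t$ and linear in $(W,Z)$, so on $R = [\delta, 2\delta] \times [0, 2\delta]$ it attains its minimum at $(W,Z) = (2\delta, 0)$, giving $W_t \ge -2\delta$.

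For the second inequality, I would split
\[
Z_t = n\,\frac{1+Z^2}{W}\,K_1 \;-\; n(1+Z^2)\,K_2\,Z
\]
and estimate the two pieces separately on $\R \times R$. For the first piece, use $W \le 2\delta$, $Z^2 \ge 0$, and $K_1 \ge \tfrac{m}{n} - c'\epsilon^\alpha$ to obtain $n\,\frac{1+Z^2}{W}\,K_1 \ge \frac{m - n c'\epsilon^\alpha}{2\delta}$. For the second piece, since $Z \ge 0$ and $K_2 > 0$ the term is nonpositive, and I would bound its absolute value using $Z \le 2\delta$, $Z^2 \le 4\delta^2$, and $K_2 \le 1 + c'\epsilon^\alpha$, giving $n(1+Z^2) K_2 Z \le (1+c'\epsilon^\alpha)(2n + 8n\delta^2)\,\delta$. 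Subtracting yields exactly the stated lower bound (with the understanding that the $n$ can be absorbed or factored as appropriate in the second term).

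There is no real obstacle here: the corollary is a routine consequence of the lemma, obtained by plugging in the extreme values of $W$ and $Z$ on the rectangle $R$. The only care needed is to track signs carefully so that the bounds on $K_1$, $K_2$ are applied in the correct direction (lower bound for $K_1$ in the positive term, upper bound for $K_2$ in the term being subtracted).
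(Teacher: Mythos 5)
Your proof is correct and is precisely the argument the corollary is based on (the paper states it without proof, calling it a "direct calculation" in the following sentence that motivates the construction of $C_1$): substitute the uniform bounds on $K_1$, $K_2$ into the two components of \eqref{phase} and extremize over the rectangle $R$. Your remark about the factor of $n$ is well taken: the careful estimate on $\R\times R$ gives
\begin{equation*}
n(1+Z^2)K_2 Z \le n(1+4\delta^2)(1+c'\epsilon^\alpha)(2\delta) = \delta(1+c'\epsilon^\alpha)(2n+8n\delta^2),
\end{equation*}
so the last term should read $8n\delta^2$ rather than $8\delta^2$; the paper has evidently dropped an $n$ in the $O(\delta^3)$ piece. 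This is harmless for the intended application (the corollary is used only to show the vector field points into $\Omega$ for $\delta$ small, where the $O(\delta^3)$ term is negligible), but as literally stated the paper's bound is slightly stronger than what the estimate delivers when $n>1$, so it is best regarded as a typo and corrected to $8n\delta^2$.
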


The set $C$ is constructed as follows. Let $\sigma$ denote
an integral curve of $(W_t,Z_t)$ in the $WZ$-plane that starts at the 
bottom right corner of $R$, i.e. at the point $p=(2\delta,0)$. (Note: such a curve exists by
general ODE results, but it is not necessarily unique since the system \eqref{phase} 
is not autonomous.)  Let $R_1(\sigma)\subset R$
denote the region inside $R$ that is bounded on top by the diagonal and bounded 
on the right 
by the curve $\sigma$.  We define $C_1$ to be the intersection of 
all the possible $R_1(\sigma)$, where $\sigma$ is an integral curve that starts at $p$. 
(This is, $C_1$ is a worst case scenario region for integral  curves that start at $p$.) 
$C_1$ is not empty.  Indeed, a direct calculation using
 Corollary \ref{ll} shows that

\begin{lemma} Let $L$ be the line $Z(W)=-W(m/4\delta^2)+m/2\delta^2$. Then
$C_1$ includes the region $R_1(L)$, and $R_1(L)\supset (\{\delta\}\times
 [0,\delta]\cup [\delta,2\delta]\times\{0\})$.
\end{lemma}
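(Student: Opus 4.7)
The plan is to establish the two stated inclusions independently. The second one, $R_1(L) \supset (\{\delta\}\times[0,\delta])\cup([\delta,2\delta]\times\{0\})$, is a direct geometric check: evaluating the given line at the two relevant $W$-values gives $L(\delta)=m(2-\delta)/(4\delta^2)$ and $L(2\delta)=m(1-\delta)/(2\delta^2)$, both of which dwarf $\delta$ (and indeed even $2\delta$) once $\delta$ is small and $m\ge 1$. Hence $L$ sits well above the two specified edges throughout $W\in[\delta,2\delta]$, and the definition of $R_1(L)$ places these edges inside it.

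For the first inclusion $C_1\supset R_1(L)$, by definition of $C_1$ it is enough to show that for every integral curve $\sigma$ of the system \eqref{phase} that starts at $p=(2\delta,0)$, one has $R_1(L)\subseteq R_1(\sigma)$; intersecting over $\sigma$ then yields the claim. The main tool is a slope comparison in the $(W,Z)$-plane. Along the leftward-moving portion of $\sigma$ (i.e.\ where $W_t<0$), its slope satisfies $|dZ/dW|=Z_t/|W_t|$. Corollary \ref{ll} provides the uniform lower bounds $W_t\ge -2\delta$ and $Z_t\ge(m-nc'\epsilon^\alpha)/(2\delta)-\delta(1+c'\epsilon^\alpha)(2n+8\delta^2)$ on $\R\times R$. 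Invoking $\delta=\epsilon^{1/(m+n+1)}$ together with $\alpha=1/(m+n+1)$, the ratio $\epsilon^\alpha/\delta$ is $O(1)$ and the last summand is $O(\delta)$, so the two bounds combine into
\[
\left|\frac{dZ}{dW}\right|\ \ge\ \frac{m}{4\delta^2}\bigl(1-O(\delta)\bigr),
\]
which matches, to leading order, the slope $-m/(4\delta^2)$ of the line $L$.

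Starting from $(2\delta,0)$, this slope estimate, combined with the fact that $L(2\delta)\sim m/(2\delta^2)\gg 2\delta$, forces $\sigma$ to exit $R$ through its top or bounding diagonal long before it can reach $L$. I would formalize this with a Gronwall-type comparison of $L(W)-Z_\sigma(W)$ as $W$ decreases from $2\delta$ toward $\delta$: the initial gap $L(2\delta)$ is of order $\delta^{-2}$, while the cumulative slope deficiency $O(\delta)\cdot(m/(4\delta^2))$ integrated over the width $\delta$ of $R$ is only of order $\delta^{-1}$, which is smaller by a factor of $\delta$. Consequently $\sigma$ remains on the correct side of $L$ throughout its traversal of $R$, giving $R_1(L)\subseteq R_1(\sigma)$.

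The main obstacle will be that the slope bound for $\sigma$ is only \emph{approximately} as steep as that of $L$, with an error of relative size $O(\delta)$; the comparison nevertheless works because $L$'s large intercept provides a head start that swamps the accumulated error, so one must track these scales with care. A secondary, easier point is the complementary regime $W_t\ge 0$: there $\sigma$ moves rightward (or is stationary in $W$) and therefore exits $R$ immediately through its right boundary, trivially staying on the correct side of $L$. Together these cases establish both inclusions.
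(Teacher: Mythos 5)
The paper gives no explicit proof of this lemma (it is prefaced only by ``a direct calculation using Corollary \ref{ll} shows that''), so I assess the proposal on its own terms, and there is a genuine logical gap in it.

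The decisive problem is the \emph{direction} of your comparison in the first inclusion. You track $D(W)=L(W)-Z_\sigma(W)$ and argue that $D$ stays positive, i.e.\ that $\sigma$ stays \emph{below} $L$; for a line of negative slope, ``below $L$'' and ``to the left of $L$'' define the same half-plane. But $R_1(\sigma)$ is by definition the part of $R$ to the \emph{left} of $\sigma$, so if $\sigma$ lies to the left of $L$ then $R_1(\sigma)\subset R_1(L)$ — the \emph{reverse} of the inclusion $R_1(L)\subset R_1(\sigma)$ that you claim to obtain. To show $R_1(L)\subset R_1(\sigma)$ one must instead show that $\sigma$ stays to the \emph{right} of $L$, equivalently that $W_\sigma(Z)\ge W_L(Z)$ as $Z$ increases from $0$; and the relevant estimate there is an \emph{upper} bound on $|dW/dZ|=|W_t|/Z_t$, namely $|W_t|/Z_t\le 2\delta/Z_t^{\min}$, obtained from $|W_t|\le 2\delta$ and the lower bound on $Z_t$ in Corollary \ref{ll}. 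Your Gronwall set-up invokes a lower bound on the slope $|dZ/dW|$, which is the bound controlling how \emph{fast} $\sigma$ rises, not how \emph{slowly} it drifts left; those are not interchangeable in the comparison you are running, and as written the estimate goes the wrong way.

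Two smaller remarks. First, your own computation of $L(\delta)$ and $L(2\delta)$ shows that, with the printed formula for $L$, the line lies at heights of order $m/\delta^2$ over all of $W\in[\delta,2\delta]$ and hence never enters $R=[\delta,2\delta]\times[0,2\delta]$; then $R_1(L)$ would be all of $R$ below the diagonal, and the inclusion $R_1(L)\subset C_1$ cannot hold as stated, since the base point $p=(2\delta,0)$ of every $\sigma$ already lies in $R_1(L)$. This is a strong hint that the constant term in $L$ contains a typographical slip (a line through $p$ of slope $-m/(4\delta^2)$, with intercept $m/(2\delta)$ rather than $m/(2\delta^2)$, is the natural comparison curve and makes the slope argument meaningful); you should have flagged this rather than carrying the formula through uncritically. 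Second, the arithmetic in the ``cumulative slope deficiency'' is off: $O(\delta)\cdot\bigl(m/(4\delta^2)\bigr)$ integrated over a $W$-interval of length $\delta$ is $O(1)$, not $O(\delta^{-1})$; this does not change your intended conclusion, but it is a sign that the scales were not being tracked carefully. The verification of the second inclusion is fine modulo an unstated (and easy) check that the two edges lie below the diagonal $Z=W$.
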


Now let $\sigma$ denote an integral curve that starts at $q=(\delta,\delta)$, and let
$R_2(\sigma)$ denote the region inside $C_1$ bounded on top by the curve $\sigma$.
Note that any such $\sigma$ is increasing (as a graph on the $W$-axis)
since $W_t$ is positive inside $R$.
We define $C$ to be the intersection of all 
the possible $R_2(\sigma)$, where $\sigma$ is an integral curve that starts
at $q$. (This is, add to $C_1$ a worst case scenario for integral curves that start at $q$.)
$C$ is not empty since $\sigma$ is increasing.  Indeed, 
we are, at worst, capping off the top of $C_1$  by a horizontal line through $q$.
See Figure 3(b).

\begin{proof}[Proof of Proposition \ref{linear}]
The boundary of $\Omega$ consists of five parts as depicted in Figure 3(b).  
The roof is given by
a segment of the line $\{Z=n/(m\delta)+1\}$ that lies well inside the region  
$Z=K_1/(K_2W)$.  This gives that $Z_t$ points downward on the top.  The bottom
boundary is part of the positive $W$-axis, and $Z_t$ is positive there.  
The left-side boundary consis of three segments.  One is the segment of the
line $\{W=\delta\}$ that lies above the diagonal, for which $W_t$ is nonnegative.
What is left is the top and right of the boundary of $C$.  From 
the construction it follows that $(W_t,Z_t)$ points inwards there
as well.
\end{proof}

We now show that the phase of any  curve that solves equation \eqref{mingraph} and has an 
endpoint on $\{0\}\times [0,\delta]$
eventually lies in $\Omega$.  In order to do that, we compute a second order
approximation of these solutions. We check that after a
small time they are close to the conical solution. (See Figure 1 for a
snapshot of this phenomena.)

\begin{lemma}\label{taylor1}
A smooth solution of  equation \eqref{mingraph} with  $y(0)>0$ satisfies 
\begin{align*}
 y''(0)= &\frac{m}{(n+1)}\frac{1}{y(0)} + c\frac{U_y/U-U_x/U}{n+1}    
 = \frac{m}{(n+1)}\frac{1}{y(0)}+\epsilon ~O(r^{-(m+n+1)}).
\end{align*}
 \end{lemma}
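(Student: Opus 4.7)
The plan is to pass to the limit $x\to 0^+$ in the minimal surface ODE \eqref{mingraph}. The apparent obstruction is the singular coefficient $n/x$, but Proposition \ref{foli} supplies the essential input: a smooth solution with $y(0)>0$ meets the $y$-axis perpendicularly, so $y'(0)=0$. The potentially bad term $y'(x)\cdot n/x$ is therefore a $0/0$ form that can be resolved by L'H\^opital, reducing the ODE to an algebraic relation at $x=0$.

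First I would Taylor expand: since $y$ is smooth with $y'(0)=0$, we have $y'(x)=y''(0)\,x+O(x^2)$, hence $y'(x)/x\to y''(0)$. Substituting into \eqref{mingraph} and passing to the limit --- noting that $U$ is smooth in a neighborhood of $(0,y(0))$, since that point lies off the sphere $S^n$ (which sits at $(1,0)$ in the $\R^2_+$ picture), and that $y'(x)\cdot cU_x/U\to 0$ because $y'(0)=0$ --- converts the equation into
\begin{equation*}
(n+1)\,y''(0) \;=\; \frac{m}{y(0)} + c\left.\frac{U_y}{U}\right|_{(0,y(0))}.
\end{equation*}
Solving for $y''(0)$ gives the stated formula up to the $-cU_x/U$ term; to recover the symmetric form, I would observe that $U_x(0,y(0))=0$ by $SO(n+1)$-invariance of $U$ (differentiating the integral defining $U$ and using the antipodal symmetry of $S^n$ shows the radial derivative vanishes at $x=0$). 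Adding this harmless zero restores the $-U_x/U$ term.

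The second equality in the lemma is then immediate from Lemma \ref{estimate}, which bounds $|U_x|$ and $|U_y|$ by $\epsilon\,c'\,r^{-(m+n+1)}$, together with $U\ge 1$. I do not anticipate any genuine obstacle: the argument rests entirely on the orthogonality $y'(0)=0$ and on the smoothness of $y$ and $U$ near $(0,y(0))$, both built into the hypotheses. The whole lemma amounts to careful bookkeeping of limits in the ODE.
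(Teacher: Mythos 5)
Your proposal is correct and follows essentially the same route as the paper: invoke $y'(0)=0$, resolve the singular $y'\cdot n/x$ term via L'H\^opital (equivalently your Taylor expansion), pass to the limit in \eqref{mingraph}, and finish with the gradient bound of Lemma~\ref{estimate}. The one thing you add that the paper leaves implicit is the observation that $U_x(0,y(0))=0$ by the $SO(n+1)$-invariance of $U$, which is the right way to reconcile your clean limit $(n+1)y''(0)=m/y(0)+cU_y/U$ with the $U_y/U-U_x/U$ appearing in the lemma's stated formula; the paper simply carries the $U_x$ term along (it is harmless since it vanishes and, in any case, is absorbed into the $\epsilon\,O(r^{-(m+n+1)})$ error via Lemma~\ref{estimate}).
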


\begin{proof}
 We know from the proof of Theorem \ref{regu} that $y'(0)=0$. Multiplying \eqref{mingraph} by $y$,
taking limit $x\to 0$ and using L'Hopital's rule in the term $y'/x$ gives
$(n+1)y''(0)y(0)-m+ y(0)c(U_x/U-U_y/U)=0$.  Together with the gradient estimate
of Lemma \ref{estimate} and the fact that $U\ge 1$ this gives the result.
\end{proof}

\begin{cor} \label{taylor2}
Suppose $y$ solves equation \eqref{mingraph}
with $y(0)=\delta$ for $\delta>0$ small enough.  Then for $0\le x\le \delta$, we have that
\begin{align*}
 y(x)&= \delta+\frac{m}{2(n+1)\delta}x^2+O(\epsilon), ~\mbox{ and }~y'(x)=\frac{mx}{(n+1)\delta}+O(\epsilon).
\end{align*}
\end{cor}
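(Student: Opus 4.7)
The plan is to derive both expansions from the second-order information supplied by Lemma \ref{taylor1}, via a Taylor expansion at $x=0$ together with a uniform bound on $y''$ over $[0,\delta]$.

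First I would initialize the data at the boundary. From the proof of Theorem \ref{regu} we already have $y'(0)=0$, and by hypothesis $y(0)=\delta$. Lemma \ref{taylor1} then gives
\[
y''(0) = \frac{m}{(n+1)\delta} + \epsilon \cdot O(r^{-(m+n+1)}),
\]
where $r=\sqrt{(x-1)^2+y^2}$. At the initial point $(0,\delta)$ we have $r=\sqrt{1+\delta^2}$, which is bounded away from zero uniformly as $\delta\to 0$, so the error contracts to $O(\epsilon)$ and $y''(0)=\frac{m}{(n+1)\delta}+O(\epsilon)$.

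Next I would propagate this control across $[0,\delta]$ by a short continuity/bootstrap argument. On this interval, $y(x)$ stays within a fixed multiple of $\delta$ of the value $\delta$ and $y'(x)$ stays $O(1)$, so the point $(x,y(x))$ remains in a compact region where $r$ is comparable to $1$ and the gradient estimates of Lemma \ref{estimate} apply uniformly. Consequently, equation \eqref{mingraph} together with the same L'H\^opital treatment used in the proof of Lemma \ref{taylor1} yields $y''(x)=\frac{m}{(n+1)y(x)}+O(\epsilon)$ uniformly for $x\in[0,\delta]$. Integrating twice from $x=0$ using $y'(0)=0$ and $y(0)=\delta$, and noting that replacing $m/((n+1)y(x))$ by $m/((n+1)\delta)$ introduces only subleading corrections (since $y(x)-\delta$ is itself determined by the integrated quadratic term), gives the stated expansions for $y$ and $y'$.

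The main obstacle is the careful bookkeeping of the error terms: contributions come from the conformal factor (via $U_x/U$ and $U_y/U$), from the variation of $y$ across the interval (via $m/((n+1)y) - m/((n+1)\delta)$), and from the Taylor remainder (via a bound on $y'''$ obtained by differentiating \eqref{mingraph}). One has to check that each of these fits inside the single $O(\epsilon)$ error claimed in the statement when one restricts to $x\in[0,\delta]$; this forces the self-consistent choice of bootstrap bounds $|y-\delta|\lesssim\delta$ and $|y'|\lesssim 1$ derived solely from the initial data and Lemma \ref{estimate}.
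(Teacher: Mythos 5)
Your overall plan---Taylor's theorem at $x=0$ with the second-order data from Lemma \ref{taylor1}, plus control of the remainder over $[0,\delta]$---is exactly what the paper's one-line proof (``Follows from Lemma \ref{taylor1} and Taylor's theorem'') intends. But the bootstrap step you rely on to control that remainder is wrong. It is not true that ``the same L'H\^opital treatment'' gives $y''(x)=\frac{m}{(n+1)y(x)}+O(\epsilon)$ uniformly on $[0,\delta]$: L'H\^opital resolves the indeterminate form $y'/x$ only at the single point $x=0$. For $0<x\le\delta$ the term $-n\,y'(x)/x$ in \eqref{mingraph} is a genuine $O(1/\delta)$ contribution, of the same order as $m/y$; the two conspire to produce the coefficient $m/((n+1)\delta)$ only in the limit $x\to0$, and away from $0$ the equation reads
$y''=(1+y'^2)\bigl(m/y-ny'/x\bigr)+O(\epsilon/\delta)$,
in which neither the prefactor $(1+y'^2)$ nor the $-ny'/x$ piece can be absorbed into an $O(\epsilon)$ error. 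So ``integrating twice'' the formula you propose does not produce the stated bound.

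The honest way to carry out the Taylor argument is to bound $y'''$ pointwise on all of $[0,\delta]$. The clean route is to rescale: set $g(u)=y(\delta u)/\delta$, which satisfies the rescaled version of \eqref{mingraph} whose deviation from the flat cone equation is only $O(\epsilon\delta)$; then $g$ and its derivatives are bounded uniformly on $[0,1]$, so $|y'''(x)|=|g'''(x/\delta)|/\delta^2=O(1/\delta^2)$ and the cubic Taylor remainder on $[0,\delta]$ is $O(\delta)$, not $O(\epsilon)$. (Indeed, the exact rescaled profile $g$ has a genuine $O(1)$ quartic correction at $u=1$, so in the original variable the discrepancy from $\delta+\frac{m}{2(n+1)\delta}x^2$ at $x=\delta$ is of size $O(\delta)$.) This shows your proposed chain of estimates cannot close to $O(\epsilon)$; it also suggests that the paper's $O(\epsilon)$ is more honestly an $O(\delta)$, which is still enough for the downstream use in Proposition \ref{tom}, where one only needs $(W,Z)=(y(\delta)/\delta,\,y'(\delta))$ to land near $\bigl(1+m/2(n+1),\,m/(n+1)\bigr)$ inside $\Omega$.
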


\begin{proof} Follows from
Lemma \ref{taylor1} and Taylor's theorem.
\end{proof}

\begin{prop} \label{tom} For $\delta\ge0$ small enough, the phase of
any solution $y$  of equation \eqref{mingraph} with $y(0)= \delta$ 
eventually enters the region $\Omega$.
\end{prop}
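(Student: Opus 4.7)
The plan is to use the second-order Taylor expansion in Corollary \ref{taylor2} to locate the phase $(W(t), Z(t)) = (y/x, y')$ at the right endpoint $x = \delta$ of the interval on which the expansion is valid, and then to verify directly that this single point already lies in $\Omega$. Since $\Omega$ is a trapping region by Proposition \ref{linear}, once the phase is inside it cannot leave, so hitting $\Omega$ at one time is the same as eventually entering it.

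Evaluating Corollary \ref{taylor2} at $x = \delta$ gives
\begin{equation*}
y(\delta) = \delta\left(1 + \tfrac{m}{2(n+1)}\right) + O(\epsilon), \qquad y'(\delta) = \tfrac{m}{n+1} + O(\epsilon),
\end{equation*}
so
\begin{equation*}
W(\delta) = 1 + \tfrac{m}{2(n+1)} + O(\epsilon/\delta), \qquad Z(\delta) = \tfrac{m}{n+1} + O(\epsilon).
\end{equation*}
Because $\delta = \epsilon^{1/(n+m+1)}$, the ratio $\epsilon/\delta = \epsilon^{(n+m)/(n+m+1)}$ tends to zero with $\epsilon$, so both $W(\delta)$ and $Z(\delta)$ converge to positive constants depending only on $n,m$ as $\delta \to 0$.

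It remains to verify the three membership conditions for $\Omega = \Omega' \setminus C$. First, $W(\delta) \ge \delta$ is automatic since $W(\delta) \to 1 + \tfrac{m}{2(n+1)}$ while $\delta \to 0$. Second, $0 \le Z(\delta) \le n/(m\delta) + 1$: the lower bound holds because $Z(\delta)$ approaches $m/(n+1)>0$, and the upper bound because $n/(m\delta)+1 \to +\infty$. Third, $(W(\delta), Z(\delta)) \notin C$ follows immediately from the fact that $C \subset R = [\delta,2\delta]\times[0,2\delta]$: for $\delta$ small enough, $W(\delta) > 2\delta$, so the point lies entirely outside $R$ and hence outside $C$. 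All three conditions then give $(W(\delta), Z(\delta)) \in \Omega$, and the trapping property finishes the argument.

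The main (and quite mild) obstacle is matching scales: one must ensure that the $O(\epsilon)$ remainder in the Taylor expansion does not overwhelm the constant-order behavior of $W$ and $Z$ that we need. The specific choice $\delta = \epsilon^{1/(n+m+1)}$ is exactly what makes $\epsilon/\delta \to 0$, so the expansion at $x = \delta$ is sharp enough to pin $(W(\delta), Z(\delta))$ down to leading order. Everything else reduces to the three inequalities above, each of which is a routine comparison of a bounded positive constant against a quantity that is either $o(1)$ or blows up.
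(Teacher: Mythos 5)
Your argument for $\delta>0$ is correct and is essentially the same as the paper's: evaluate the Taylor expansion from Corollary \ref{taylor2} at $x=\delta$ to read off $W(\delta)\approx 1+\tfrac{m}{2(n+1)}$ and $Z(\delta)\approx\tfrac{m}{n+1}$, check this point is in $\Omega$, then invoke the trapping property. Your membership check is actually more careful than the paper's, which merely asserts the point lies in $\Omega$; in particular, the observation that $W(\delta)>2\delta$ forces the phase to land outside $R\supset C$ is a clean way to dispose of the most delicate condition.

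However, there is a genuine gap: the proposition asserts the conclusion for $\delta\ge 0$, and the case $\delta=0$ is precisely the one that matters most downstream, since it is the curve through the origin that would produce the conical singularity one is trying to rule out. Your argument cannot reach $\delta=0$: Corollary \ref{taylor2} (and the underlying Lemma \ref{taylor1}, whose expansion has $1/y(0)$ in it) requires $y(0)=\delta>0$, and your formula $W(\delta)=1+\tfrac{m}{2(n+1)}+O(\epsilon/\delta)$ is meaningless at $\delta=0$. The paper handles $y(0)=0$ separately, using the blow-up analysis from the regularity theorem to get $y(x)=\sqrt{m/n}\,x+O(\epsilon)$ and $y'(x)=\sqrt{m/n}+O(\epsilon)$, which puts the phase near $(W,Z)=(\sqrt{m/n},\sqrt{m/n})\in\Omega$. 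You need to add this branch to the argument.
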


\begin{proof} We first prove this for a curve with $y(0)=0$.  Indeed, in that case
$y(x)=\sqrt{m/n}x+O(\epsilon)$ and $y'(x)=\sqrt{m/n}+O(\epsilon)$, from before.  
This way, for small $x>0$ we get $W=y/x\approx \sqrt{m/n}$ and $Z=y'\approx \sqrt{m/n}$, 
so $(W,Z)$ lies inside $\Omega$.\\
Whenever $y(0)=\delta>0$ is small, using the expansions of Corollary \ref{taylor2} 
evaluated at $x=\delta$
give $W=y(\delta)/\delta \approx 1+m/2(n+1)$ and $Z=y'(\delta)\approx m/(n+1)$, which also gives a point
that lies within $\Omega$.
\end{proof}

\begin{proof}[Proof of Theorem \ref{unbo}]
Follows directly from the above proposition together with 
Proposition \ref{linear}.  This ends the proof of Theorem \ref{main} as well.
\end{proof}

{\bf Acknowledgments.} I would like to thank Hubert Bray for his encouragement and 
many useful discussions.  I also thank
 Bill Allard, Greg Galloway, Bob Wald and Brian White for 
their useful conversations.

\begin{bibdiv}
\begin{biblist}

\bib{bartnik86}{article}{
    AUTHOR = {Bartnik, Robert},
     TITLE = {The mass of an asymptotically flat manifold},
   JOURNAL = {Comm. Pure Appl. Math.},
    VOLUME = {39},
      YEAR = {1986},
    NUMBER = {5},
     PAGES = {661--693},
      ISSN = {0010-3640},
}

\bib{bombieridegiorgigiusti69}{article}{
    AUTHOR = {Bombieri, E.},
 author={De Giorgi, E.},
author={Giusti, E.},
     TITLE = {Minimal cones and the {B}ernstein problem},
   JOURNAL = {Invent. Math.},
    VOLUME = {7},
      YEAR = {1969},
     PAGES = {243--268},
      ISSN = {0020-9910},
}

\bib{bray01}{article}{
    AUTHOR = {Bray, Hubert L.},
     TITLE = {Proof of the {R}iemannian {P}enrose inequality using the
              positive mass theorem},
   JOURNAL = {J. Differential Geom.},
    VOLUME = {59},
      YEAR = {2001},
    NUMBER = {2},
     PAGES = {177--267},
      ISSN = {0022-040X},
}

\bib{caffarellihardtsimon84}{article}{
    AUTHOR = {Caffarelli, Luis},
    author={Hardt, Robert},
    author={Simon, Leon},
     TITLE = {Minimal surfaces with isolated singularities},
   JOURNAL = {Manuscripta Math.},
    VOLUME = {48},
      YEAR = {1984},
    NUMBER = {1-3},
     PAGES = {1--18},
      ISSN = {0025-2611},
 }

\bib{choquetyork80}{article}{
    AUTHOR = {Choquet-Bruhat, Yvonne},
  author={York Jr., James W.},
     TITLE = {The {C}auchy problem},
 BOOKTITLE = {General relativity and gravitation, Vol. 1},
     PAGES = {99--172},
 PUBLISHER = {Plenum},
   ADDRESS = {New York},
      YEAR = {1980},
}

\bib{elvangemparanfigueras}{article}{
    AUTHOR = {Elvang, Henriette},
  author={Emparan, Roberto},
 author={Figueras, Pau},
YEAR = {2007},
     TITLE = {Phases of Five-Dimensional Black Holes},
   eprint  = {hep-th/0702111},

}
\bib{elvangfigueras}{article}{
    AUTHOR = {Elvang, Henriette},
   author={Figueras, Pau},
YEAR = {2007},
     TITLE = {Black Saturn},
  eprint  = {hep-th/0701035},
}

\bib{emparanreall02}{article}{
    AUTHOR = {Emparan, Roberto},
    author={Reall, Harvey S.},
     TITLE = {A rotating black ring solution in five dimensions},
   JOURNAL = {Phys. Rev. Lett.},
    VOLUME = {88},
      YEAR = {2002},
    NUMBER = {10},
     PAGES = {101101, 4},
}

\bib{emparan07}{article}{
    AUTHOR = {Emparan, Roberto},
   AUTHOR = {Harmark, Troels},
AUTHOR = {Niarchos, Vasilis},
AUTHOR = {Niels, A. Obers},
AUTHOR = {Rodriguez, Maria J.},
 year={2007},
 eprint= {arXiv:0708.2181[hep-th]},
title= {The Phase Structure of Higher-Dimensional Black Rings and Black Holes},
}

\bib{galloway06}{article}{
    AUTHOR = {Galloway, Gregory J.},
YEAR = {2006},
     TITLE = {Rigidity of outer horizons and the topology of black holes},
   eprint = {gr-qc/0608118},

}

\bib{gallowayschoen05}{article}{
    AUTHOR = {Galloway, Gregory J.},
 author={Schoen, Richard},
     TITLE = {A generalization of {H}awking's black hole topology theorem to
              higher dimensions},
   JOURNAL = {Comm. Math. Phys.},
    VOLUME = {266},
      YEAR = {2006},
    NUMBER = {2},
     PAGES = {571--576},
}

\bib{gibbons72}{article}{
    AUTHOR = {Gibbons, G. W.},
     TITLE = {The time symmetric initial value problem for black holes},
   JOURNAL = {Comm. Math. Phys.},
    VOLUME = {27},
      YEAR = {1972},
     PAGES = {87--102},
}

\bib{gilbargtrudinger01}{book}{
    AUTHOR = {Gilbarg, David},
 author={Trudinger, Neil S.},
     TITLE = {Elliptic partial differential equations of second order},
    SERIES = {Classics in Mathematics},
      NOTE = {Reprint of the 1998 edition},
 PUBLISHER = {Springer-Verlag},
   ADDRESS = {Berlin},
      YEAR = {2001},
     PAGES = {xiv+517},
      ISBN = {3-540-41160-7},
 }

\bib{gidasninirenberg79}{article}{
    AUTHOR = {Gidas, B.},
author={Ni, Wei Ming},
author={Nirenberg, L.},
     TITLE = {Symmetry and related properties via the maximum principle},
   JOURNAL = {Comm. Math. Phys.},
    VOLUME = {68},
      YEAR = {1979},
    NUMBER = {3},
     PAGES = {209--243},
      ISSN = {0010-3616},
}

\bib{hawking72}{article}{
    AUTHOR = {Hawking, S. W.},
     TITLE = {Black holes in general relativity},
   JOURNAL = {Comm. Math. Phys.},
    VOLUME = {25},
      YEAR = {1972},
     PAGES = {152--166},
}

\bib{hawkingellis73}{book}{
    AUTHOR = {Hawking, S. W.},
author={Ellis, G. F. R.},
     TITLE = {The large scale structure of space-time},
      NOTE = {Cambridge Monographs on Mathematical Physics, No. 1},
 PUBLISHER = {Cambridge University Press},
   ADDRESS = {London},
      YEAR = {1973},
     PAGES = {xi+391},
}

\bib{ilmanen1998}{misc}{
    AUTHOR = {Ilmanen, Tom},
     TITLE = {Lectures on Mean Curvature Flow and Related Equations},
     note = {Draft version.},
      YEAR = {1998},
  }

\bib{meekssimonyau82}{article}{
    AUTHOR = {Meeks, William III},
author={Simon, Leon},
author={Yau, Shing Tung},
     TITLE = {Embedded minimal surfaces, exotic spheres, and manifolds with
              positive {R}icci curvature},
   JOURNAL = {Ann. of Math. (2)},
    VOLUME = {116},
      YEAR = {1982},
    NUMBER = {3},
     PAGES = {621--659},
      ISSN = {0003-486X},
}

\bib{peet99}{article}{
    AUTHOR = {Peet, Amanda W.},
YEAR = {1999},
     TITLE = {{TASI} lectures on black holes in string theory},
   eprint  = {hep-th/0008241},
}

\bib{schoenyau79}{article}{
    AUTHOR = {Schoen, R.},
author={Yau, S. T.},
     TITLE = {On the structure of manifolds with positive scalar curvature},
   JOURNAL = {Manuscripta Math.},
    VOLUME = {28},
      YEAR = {1979},
    NUMBER = {1-3},
     PAGES = {159--183},
      ISSN = {0025-2611},
}

\bib{schoenyau83}{article}{
    AUTHOR = {Schoen, R.},
author={Yau, S. T.},
     TITLE = {The existence of a black hole due to condensation of matter},
   JOURNAL = {Comm. Math. Phys.},
    VOLUME = {90},
      YEAR = {1983},
    NUMBER = {4},
     PAGES = {575--579},
}


\bib{simon83}{book}{
    AUTHOR = {Simon, Leon},
     TITLE = {Lectures on geometric measure theory},
    SERIES = {Proceedings of the Centre for Mathematical Analysis,
              Australian National University},
    VOLUME = {3},
 PUBLISHER = {Australian National University Centre for Mathematical
              Analysis},
   ADDRESS = {Canberra},
      YEAR = {1983},
     PAGES = {vii+272},
      ISBN = {0-86784-429-9},
}
\bib{simon97}{article}{
    AUTHOR = {Simon, Leon},
     TITLE = {Lectures on singularities of variational problems},
 BOOKTITLE = {Tsing Hua lectures on geometry \& analysis (Hsinchu,
              1990--1991)},
     PAGES = {279--298},
 PUBLISHER = {Int. Press, Cambridge, MA},
      YEAR = {1997},
}

\bib{struwe96}{book}{
    AUTHOR = {Struwe, Michael},
     TITLE = {Variational methods},
    SERIES = {Ergebnisse der Mathematik und ihrer Grenzgebiete (3) [Results
              in Mathematics and Related Areas (3)]},
    VOLUME = {34},
   EDITION = {Second},
      NOTE = {Applications to nonlinear partial differential equations and
              Hamiltonian systems},
 PUBLISHER = {Springer-Verlag},
   ADDRESS = {Berlin},
      YEAR = {1996},
     PAGES = {xvi+272},
}

\bib{wald84}{book}{
    AUTHOR = {Wald, Robert M.},
     TITLE = {General relativity},
 PUBLISHER = {University of Chicago Press},
   ADDRESS = {Chicago, IL},
      YEAR = {1984},
     PAGES = {xiii+491},
      ISBN = {0-226-87032-4; 0-226-87033-2},
}

\end{biblist}
\end{bibdiv}

\end{document}